\providecommand{\U}[1]{\protect\rule{.1in}{.1in}}
\newtheorem{theorem}{Theorem}
\newtheorem{algorithm}{Algorithm}
\newtheorem{lemma}[theorem]{Lemma}
\newenvironment{proof}[1][Proof]{\noindent\textbf{#1.} }{\ \rule{0.5em}{0.5em}}
\numberwithin{theorem}{section}
\numberwithin{proposition}{section}
\numberwithin{definition}{section}
\numberwithin{example}{section}
\newcommand{\bra}[1]{\langle#1|}
\newcommand{\ket}[1]{|#1\rangle}
\newcommand{\innerprod}[2]{\langle #1 \vert #2 \rangle}
\newcommand{\outerproj}[1]{\vert #1 \rangle\!\langle #1 \vert}
\newcommand{\Tr}[1]{\operatorname{Tr}[#1]}
\newcolumntype{P}[1]{>{\centering\arraybackslash}p{#1}}
\begin{document}

\title{Quantum Algorithms for Realizing Symmetric, Asymmetric, and Antisymmetric Projectors}

\author{Margarite L. LaBorde}
\affiliation{Advanced Acoustics \& Seabed Warfare Branch, Naval Surface Warfare Center, Panama City, Florida 32407, USA}
\author{Soorya Rethinasamy}
\affiliation{School of Applied and Engineering Physics, Cornell University, Ithaca, New
York 14850, USA}
\author{Mark M. Wilde}
\affiliation{School of Electrical and Computer Engineering, Cornell University, Ithaca, New
York 14850, USA}

\begin{abstract}

    In quantum computing, knowing the symmetries a given system or state obeys or disobeys is often useful. For example, Hamiltonian symmetries may limit allowed state transitions or simplify learning parameters in  machine learning applications, and certain asymmetric quantum states are known to be resourceful in various applications. Symmetry testing algorithms provide a means to identify and quantify these properties with respect to a representation of a group. In this paper, we present a collection of quantum algorithms that realize projections onto the symmetric subspace, as well as the asymmetric subspace, of quantum systems. We describe how this can be modified to realize an antisymmetric projection as well, and we show how projectors can be combined in a systematic way to effectively measure various projections in a single quantum circuit. Using these constructions, we demonstrate applications such as testing for Werner-state symmetry and estimating Schmidt ranks of bipartite states, supported by experimental data from IBM Quantum systems. This work underscores the pivotal role of symmetry in simplifying quantum calculations and advancing quantum information tasks.
\end{abstract}

\maketitle

\tableofcontents

\section{Introduction}

Symmetry has been prevalent throughout the field of physics since the introduction of Noether's theorem~\cite{Noether1918}, and perhaps even earlier through its prominence in mathematics and observation in crystalline materials~\cite{fedorov1891symmetry}. A  recent direction of scholarship focuses on combining symmetry with quantum information theory, demonstrating how the mathematical beauty of symmetry can give rise to physically interesting phenomena in this regime. For instance, the concept of Noether's theorem, an integral link between physical symmetries and conserved quantities, was expanded upon in~\cite{marvian2014extending}, which investigated the resourcefulness of quantum states under free channels. Similarly, the resource theory of asymmetry~\cite{Marvian2013asymmetry} illustrates a connection between quantum computational resources and symmetric properties of said quantum resources. Finally, Schur--Weyl duality and its links to permutation symmetry enable quantum de Finetti theorems~\cite{caves2002unknown, konig2005finetti,christandl2007one}, which in turn provide the backbone for many  tests of entanglement, particularly in the case of $k$-extendibility~\cite{W89,DPS02,DPS04}. 

In quantum computing, symmetry can in principle be used to simplify calculations, speed up computational tasks, or tackle classically intractable problems. 
For instance, fully permutationally invariant Hamiltonians have been shown to be efficiently simulatable on classical hardware~\cite{anschuetz2023efficient}. Quantum algorithms for testing symmetry of states~\cite{laborde2021testing}, Hamiltonians~\cite{laborde2022quantum}, and open quantum systems~\cite{bandyopadhyay2023efficient} have been investigated in recent work, and some were shown to delineate a hierarchy of quantum complexity classes~\cite{rethinasamy2023quantum}. Furthermore, incorporating symmetry into quantum machine learning can reduce parameter search space and speed up calculations~\cite{ragone2022representation, larocca2022group,meyer2023exploit}. 

In physics, a natural complement to symmetry is antisymmetry. Antisymmetry is of course one particular type of symmetry, observable in nature in fermionic systems. Antisymmetric properties are pertinent for a variety of physical phenomena. Time reversal and spatial inversion antisymmetries occur in magnetic crystallography, where some can be used to classify physical properties (see~\cite{Padmanabhan2020antisymmetry} and references therein). Additionally, the antisymmetry inherent in the quantum marginal problem for fermionic systems~\cite{schilling2015quantum} implies generalized Pauli constraints for these systems. Particularly relevant to quantum information tasks, antisymmetric projections have been employed in relation to the Schmidt rank~\cite{JLV22} of quantum states---indeed, we will use this result later in this work.   

In this paper, we propose various quantum algorithms for testing symmetry, antisymmetry, and the asymmetry of states. In more detail, we delineate algorithms for projecting quantum states onto the symmetric and antisymmetric subspaces of the standard representation of a permutation group and then generalize this procedure to implement projectors in sequence, a process we term projector concatenation. In combining these mathematical projectors, we demonstrate how our tests enable us to examine relevant quantum information tasks, such as testing for Werner-state symmetry and estimating the Schmidt rank of a pure bipartite state. For these applications, we include example data from experiments run on the IBM Quantum ibm\_kyoto and ibm\_osaka computers. We also show how our algorithms generalize to more generic projectors. Finally, we examine how a similar framework can realize a quantum switch that estimates commutators of observables.

 The structure of our paper is as follows. In Section~\ref{sec:background}, we review previously established formalism for symmetry tests and introduce the symmetric and antisymmetric projectors. In Section~\ref{sec:concat}, we show how additional uncomputing steps allow for realizing both projectors in a single quantum circuit. We then show how this method can be conducted in sequence to test a state's projection onto various subspaces by means of a single circuit. We give examples for symmetric subspaces and realize the difference of two or more projectors. In Section~\ref{sec:applications}, we discuss applications of these techniques. In particular, we show how an antisymmetry test can be used to estimate Schmidt rank and how the difference of two projectors can be used as a test for Werner-state symmetry. We also show how an antisymmetrized circuit can be used to estimate commutators. Finally, we conclude in Section~\ref{sec:discussion} with a summary of our results and comments on remaining open problems.

\section{Symmetric and Antisymmetric Projections}\label{sec:background}

\subsection{Review of Symmetric Projections}

This work naturally rests on a foundation of symmetry tests; we will begin there and then build upon it recursively, generalizing more in each step. Let us begin by defining precisely what we mean by symmetry and, in turn, antisymmetry. 

Consider a finite group $G$ with a unitary representation $U:G\to \mathcal{U}(V)$ on a vector space $V$. For such a representation~$U$ of a group $G$, we define the $G$-symmetric subspace by
\begin{equation}
    \mathcal{S}^G \coloneqq \{\ket{\psi}\in V\ \vert\ U(g)\ket{\psi}=\ket{\psi}, \ \forall g\in G\}.
    \label{eq:symm-def}
\end{equation}
Vectors that belong to $\mathcal{S}^G$ are called $G$-symmetric vectors. If $V=\mathcal{H}$ for some Hilbert space $\mathcal{H}$, then the set of $G$-symmetric density matrices is given by
\begin{equation}
    \mathcal{S}^G_\rho \coloneqq \left \{\rho \in \mathcal{D}(\mathcal{H}) \ \vert\ U(g) \rho U^\dag (g),\ \forall g\in G\right \},
\end{equation}
where $\mathcal{D}(\mathcal{H})$ denotes the set of density operators acting on a Hilbert space $\mathcal{H}$.

A stronger notion of symmetry for density operators, called $G$-Bose symmetry, is discussed at length in~\cite{laborde2021testing}, wherein a mixed state $\rho$ is said to be $G$-Bose symmetric if 
\begin{equation}
\label{eq:G-symmetry-condition}
    \rho = \Pi_G\rho\Pi_G,
\end{equation}
with the projector onto the $G$-Bose symmetric subspace defined as
\begin{equation}\label{eq:groupprojection}
    \Pi_{G} \coloneqq \frac{1}{|G|} \sum_{g \in G} U(g)\, .
\end{equation}
The projection onto the symmetric subspace has been well studied and is of interest for many applications (see, e.g.,~\cite{marvian2012symmetry,MS14,harrow2013church}). It should be noted for clarity that in this work we use the term ``projector" to refer to the mathematical object above and ``projection" to refer to the image of a quantum state after being acted on by such a projector.

It was shown that a test for $G$-Bose symmetry could be achieved using a generalized phase estimation algorithm, as proposed in \cite[Chapter~8]{harrow2005applications}, and Figure~\ref{fig:GBStest} depicts this algorithm. Since $G$-Bose symmetry is a strictly stronger notion than $G$-symmetry, any state exhibiting the former symmetry naturally demonstrates the latter symmetry as well. 

\begin{figure}
\begin{center}
\includegraphics[width=0.8\columnwidth]{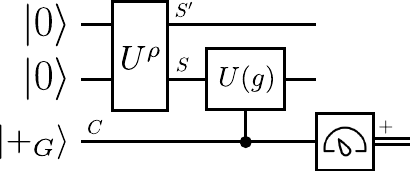}
\end{center}
\caption{Figure demonstrating the generalized phase estimation applied to test for $G$-Bose-symmetry of a state, as in~\eqref{eq:G-symmetry-condition}. The unitary $U^{\rho}$ prepares a purification $\psi_{S^{\prime}S}$ of the state $\rho_{S}$. The final measurement box with the plus-sign to the right of it indicates to accept only if the initial superposition state $\ket{+_G}$ is measured.  Its acceptance probability is equal to $\operatorname{Tr}[\Pi_{G} \rho_{S} \Pi_G] = \operatorname{Tr}[\Pi_{G} \rho_{S}]$.}
\label{fig:GBStest}
\end{figure}

For the case where $G$ is the symmetric group $S_k$ on $k$ letters, there is a natural representation $U(\sigma)$ whose action on the tensor-power Hilbert space $\mathcal{H}^{\otimes k}$ is given by permuting the Hilbert spaces according to the permutation elements $\sigma$ of~$S_k$. The space of states that is invariant under this action is known as the symmetric subspace:
\begin{equation}
    \mathcal{S}^{S_k}\coloneqq \{\ket{\psi}\in\mathcal{H}^{\otimes k}\ \vert\ U(\sigma)\ket{\psi}=\ket{\psi}, \ \forall\sigma\in S_k\}.
\end{equation}
The projection onto $\mathcal{S}^{S_k}$  is given by acting on a state with the projector 
\begin{equation}
\label{eq:sk_projector}
    \Pi_{S_k} \coloneqq \frac{1}{k!} \sum_{\sigma \in S_k} U(\sigma) \, ,
\end{equation}
where, unless stated otherwise, the representation $\{U(\sigma)\}_{\sigma\in S_k }$ of the symmetric group is generally taken to be the standard representation. Explicitly, we consider permutations $\sigma\in S_k$ of the Hilbert space in such a manner that the action of $U(\sigma)$ on a state $\ket{\phi}_{A_1A_2 \cdots A_k}$ is realized as
\begin{equation}\label{eq:permutationaction}
    U(\sigma)  \ket{\phi}_{A_1A_2 \cdots A_k} = \ket{\phi}_{A_{\sigma(1)}A_{\sigma(2)} \cdots A_{\sigma(k)}}\, ,
\end{equation}
where $\{U(\sigma)\}_{\sigma \in S_k}$ is a unitary representation of $S_k$ in the same manner as those given in~\cite{barenco1997stabilization}. Equivalently,
\begin{equation}
    U(\sigma) = \sum_{i_1, \ldots, i_k\in[d]} |i_{\sigma^{-1}(1)}, \ldots, i_{\sigma^{-1}(n)}\rangle \! \langle i_1, \ldots, i_k|,
\end{equation}
as defined in \cite[page~3]{harrow2013church}. 

Given the projector onto the symmetric subspace, Ref.~\cite{barenco1997stabilization} gave a construction to test whether an input state $\rho$ obeys the $S_k$-Bose symmetry condition
\begin{equation} 
\label{eq:sk-symmetry-condition}
    \rho = \Pi_{S_k} \rho \Pi_{S_k}\, .
\end{equation}
Using the fact that the symmetric group $S_k$ can be generated by acting on the group $S_{k-1}$ with all transpositions of the type $(i\ k)$, the entire symmetry test can be achieved using only transpositions and particular choices of ancilla qubits. 

We will now briefly review the construction, but we urge readers to review the more detailed explanations given in~\cite{barenco1997stabilization} and~\cite{bradshaw2022cycle}. The circuit for $S_k$ is generated recursively given a circuit for $S_{k-1}$, and $S_2$ can be constructed explicitly. Define a control state
\begin{equation}
    \ket{+_{S_k}}_C \coloneqq \bigotimes_{j=1}^{k} A_j \ket{0}^{\otimes j-1} \, ,
\end{equation}
where $A_j $ is a unitary with the following action:
\begin{align}
\label{eq:defAj}
    A_j \ket{0}^{\otimes j-1} &= \frac{1}{\sqrt{j}}\ket{0}^{\otimes j-1} + \sqrt{\frac{j-1}{j}} \ket{W_{j-1}},  \\
    &= \frac{1}{\sqrt{j}} \left( \ket{0}^{\otimes j-1} + \sum_{i=1}^{j-1}\ket{2^{i-1}} \right),
\end{align}
and $\ket{2^{i-1}}$ is the computational basis state with a one in the $i$-th position and zeros everywhere else. The $j$-th iteration controls an array of $j-1$ C-SWAP gates corresponding to transpositions of the form $( i\ j )$, where $i$ runs from $1$ to $j-1$. 

To see this, consider generating the test for $S_3$. Begin with the circuit for $S_2$, the base case, which is equivalent to the SWAP test between Hilbert spaces $\mathcal{H}_1$ and $\mathcal{H}_2$. The control state is then given by one qubit in the state
\begin{equation}
    \ket{+_{S_2}}=\frac{1}{\sqrt{2}}\left (\ket{0}+\ket{1} \right ).
\end{equation}
To create the test for $S_3$, we append to this a layer where $j-1 = 2$. Thus two ancilla qubits are necessary to control two new C-SWAP gates, and they are initialized in the state 
\begin{equation}
\frac{1}{\sqrt{3}}\left (\ket{00} + \ket{01}+\ket{10} \right),
\end{equation}
for an overall control state of
\begin{equation}\label{eq:s3examplecontrol}
    \ket{+_{S_3}}\coloneqq \frac{1}{\sqrt{3}}\left (\ket{00} + \ket{01}+\ket{10}\right ) \otimes \ket{+_{S_2}}\, .
\end{equation}
Controlling from~\eqref{eq:s3examplecontrol}, if the additional two ancilla qubits are in the state $\ket{01}$, the second and third Hilbert spaces are swapped, realizing the transposition $( 2\ 3)$. If two new ancilla qubits are in the state $\ket{10}$, the first and third Hilbert spaces are swapped, realizing $( 1\ 3)$. As before, $\ket{1}$ on the original ancilla realizes a swap between the first and second Hilbert spaces, or the transposition $( 1\ 2)$. There is some freedom in choosing which transposition corresponds to which control states, but this general procedure reflects the construction given in Figure~\ref{fig:generalpermutation} and is discussed at length in~\cite{bradshaw2022cycle}. With this recipe, the projector $\Pi_{S_k}$ can be realized for every $k$ using circuit depth $\mathcal{O}(k^2)$.

\begin{figure}
\begin{center}
\includegraphics[width=\columnwidth]{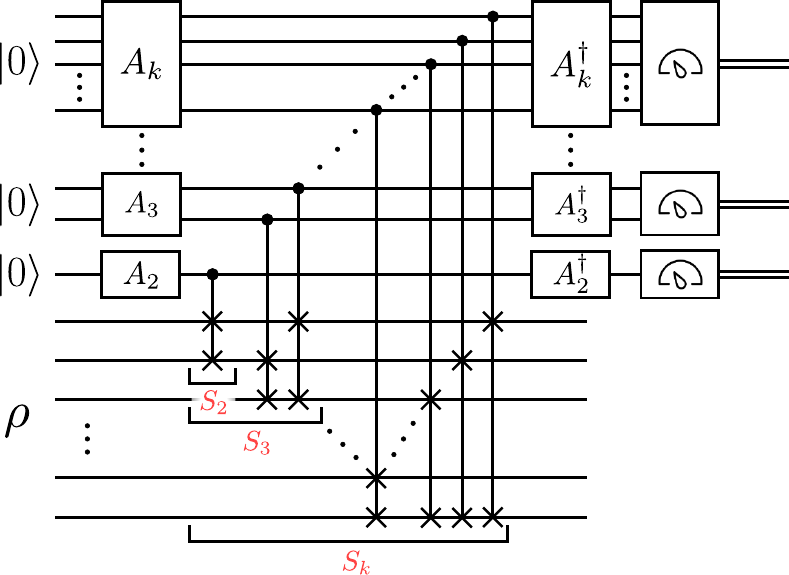}
\end{center}
\caption{Quantum circuit for systematically realizing a test for the symmetric group, using the procedure described in~\cite{barenco1997stabilization}. The definition of the unitary $A_j$ is given in~\eqref{eq:defAj}. The symbol $\rho$ refers to a state of $k$ qubits or systems.}
\label{fig:generalpermutation}
\end{figure}

\subsection{Antisymmetric Projection}

\label{sec:antisymmetry}

Let us now consider antisymmetry in terms of the standard representation of a group---or rather, in terms of permutations. Cayley's theorem~\cite{Dummit_Foote} states that any finite group $G$ is isomorphic to a subgroup of a permutation group, and so we can safely restrict ourselves to considering group elements as being isomorphic to permutations without loss of generality. An element $\ket{\psi}$ in the appropriate vector space $V$ is antisymmetric with respect to a representation $U$ of a group $G$ if, for every transposition $\sigma^{\prime} \in G$, we have that 
\begin{equation}\label{eq:transpositionaction}
U(\sigma^{\prime})\ket{\psi} = (-1) \ket{\psi} \, .
\end{equation}
Equivalently, an antisymmetric vector $\ket{\psi}$ is defined by
\begin{equation}
    U(\sigma) \ket{\psi} = \text{sgn}(\sigma) \ket{\psi}\, ,
\end{equation}
where $\text{sgn}(\sigma)$ denotes the sign of the permutation $\sigma$.

To model this action on quantum states, let us consider permutations of systems acting as in~\eqref{eq:permutationaction}, and define single transpositions $\sigma_{ij}$ that swap the $i$-th and $j$-th systems, and some multipartite state $\rho_{A_1 \cdots A_k}$,
which is defined on a tensor-product Hilbert space $\mathcal{H}^{\otimes k}$. Then we propose a  test analogous to~\eqref{eq:sk-symmetry-condition}, wherein we define the antisymmetric projector
\begin{equation}\label{eq:antisymmetricprojector}
    \Pi_{\textrm{anti}_k} \coloneqq \frac{1}{k!} \sum_{\sigma \in S_k} \operatorname{sgn}(\sigma) U(\sigma) \, ,
\end{equation}
where $\{U(\sigma)\}_{\sigma \in S_k}$ is a unitary representation of $S_k$ in the same manner as that given in~\cite{barenco1997stabilization}; specifically, we choose a unitary representation such that it obeys the desired group action described in~\eqref{eq:permutationaction}. 

Given this definition of antisymmetry, a corresponding algorithmic test can be distilled from that  in~\cite{barenco1997stabilization, bradshaw2022cycle}. The circuit construction is depicted in Figure~\ref{fig:antiSymTest}. 

\begin{figure}
\begin{center}
\includegraphics[width=\columnwidth]{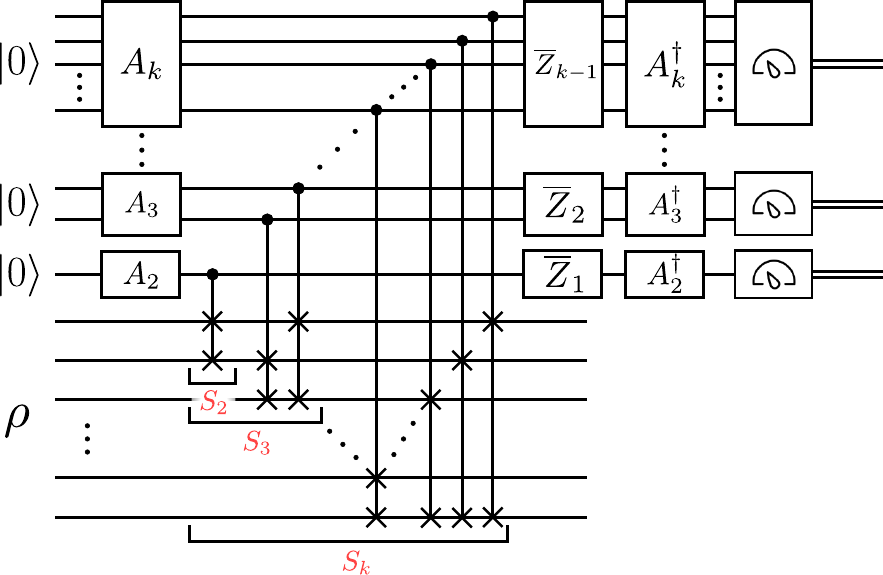}
\end{center}
\caption{Figure demonstrating how to systematically generate a test for antisymmetry. The definition of the unitaries $A_j$ is given in~\eqref{eq:defAj}, and $\overline{Z}_n$ is defined in~\eqref{eq:Zbar}. The symbol $\rho$ refers to a state on $k$ qubits or systems.
}
\label{fig:antiSymTest}
\end{figure}

As a reminder, the control state for that algorithm is 
\begin{equation}
    \ket{+_{S_k}}_C \coloneqq \bigotimes_{j=1}^{k} A_j \ket{0}^{\otimes j-1} \, ,
    \label{eq:control-state-anti-sym}
\end{equation}
where the action of the unitary $A_j$ is given in~\eqref{eq:defAj}. Here we take the same approach shown in Figure~\ref{fig:generalpermutation}, but additionally perform the operation $\overline{Z}_{k-1} \otimes \cdots \otimes \overline{Z}_1$ after the C-SWAP gates. 
The operation $\overline{Z}_n$ is defined as follows:
\begin{equation}
    \label{eq:Zbar}
    \overline{Z}_n \coloneqq Z^{\otimes n},
\end{equation}
where $n$ is the number of qubits in the register. After this change, the effective controlled unitary in each instance is given by 
\begin{multline}
    \operatorname*{C-antiSWAP}({i,j}) \coloneqq \ket{0}\!\bra{0}_C \otimes \mathbb{I}_{i,j} \\
    + \ket{1}\!\bra{1}_C \otimes (-1)U(\sigma_{i,j}),
\end{multline}
where $C$ is the specific qubit controlling the swap and $i,j$ are the spaces to be swapped. Additionally, $U(\sigma_{i,j})$ is a SWAP gate acting on the $i$th and $j$th registers. Replacing all of the controlled-SWAPs in the algorithm with $\operatorname*{C-antiSWAP}_{i,j}$ gives an overall implementation of the antisymmetric projector in~\eqref{eq:antisymmetricprojector}:
\begin{equation}
    \bra{+_{S_k}}_C \ \prod_{j=2}^{k} \prod_{i=1}^{j-1} \operatorname*{C-antiSWAP}(i,j) \ket{+_{S_k}}_C = \Pi_{\textrm{anti}_k} \, .
\end{equation}
Therefore, a test of antisymmetry can be easily generated from a corresponding test of symmetry.

\section{Concatenating Projectors}\label{sec:concat}

\subsection{Uncomputing to Expand upon a Symmetry Test}\label{sec:uncomputing}
The approach in Figure~\ref{fig:GBStest} can be expanded upon by means of uncomputation. Uncomputation refers to a technique commonly employed in quantum algorithms wherein a computation is performed in reverse after the desired output has been coherently stored~\cite{Bennett1973logical,aaronson2015classification}. Of course, as with many quantum algorithms, the control state is already undergoing some uncomputation. However, by reversing the action on the remaining systems, the outcome can be used to measure not only the projection onto the symmetric subspace but also the projection onto the asymmetric space as well.

The rough sketch of the algorithm goes as thus: perform a symmetry testing circuit on the state in question. Then, instead of performing the usual measurement, store the acceptance outcome in an additional ancilla by performing a CNOT controlled off of the all-zeros outcome. After this, uncompute the symmetry test by reversing the order of the gates in the original test and taking their conjugate transposes. If the new ancilla qubit is measured, the probability that we measure one is $\left\Vert \Pi_G \ket{\psi} \right\Vert_2^2$ and the probability to measure zero is $\left\Vert \left( \mathbb{I} - \Pi_G\right) \ket{\psi} \right\Vert_2^2$. Furthermore, the post-measurement state will be proportional to $\Pi_G \ket{\psi}$ in the first case and to $\left( \mathbb{I} - \Pi_G\right) \ket{\psi}$ in the second case.  Therefore, this algorithm can measure the symmetric and asymmetric projections of a state using a single circuit and in a coherent manner.

We will now give a more thorough, mathematical description of the algorithm. 

\begin{algorithm}
Given a unitary representation $\left\{  U(g)\right\} _{g\in G}$\ of a group $G$, we can implement the group projection from~\eqref{eq:groupprojection} with the correct output probability distribution and post-measurement states, by performing the following steps:
\begin{enumerate}
\item Prepare the state $\ket{+_G}_{C}\coloneqq \frac{1}{\sqrt{\left\vert G\right\vert}}\sum_{g\in G}\ket{g}_{C}$.
\item Perform the controlled unitary $\sum_{g\in G}\ket{g}\!\bra{g}_{C}\otimes U(g)$ from the control register to the data register.
\item Prepare a control qubit in the state $\ket{0}_{C^{\prime}}$ and perform the controlled unitary 
\begin{equation}
X_{C^{\prime}} \otimes \ket{+_G}\!\bra{+_G}_{C} + \mathbb{I}_{C^{\prime}} \otimes \left(  \mathbb{I}-\ket{+_G}\!\bra{+_G}\right)_{C}.
\end{equation}
\item Perform the controlled unitary $\sum_{g\in G}\ket{g}\!\bra{g}_{C}\otimes U^{\dag}(g)$ from the first control register $C$ to the data register.
\item Optional:\  discard the control register~$C$.
\item Measure the register $C^{\prime}$ in the computational basis.
\end{enumerate}
\end{algorithm}

\begin{figure}
\begin{center}
\includegraphics[width=\columnwidth]{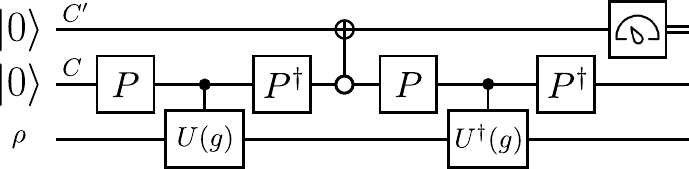}
\end{center}
\caption{Figure demonstrating the use of uncomputation to generalize the previous symmetry test. The gate $P$ acts on the all-zeros state to create a superposition $\ket{+_G}$, and $\{U(g)\}_{g\in G}$ denotes the unitary representation of $G$. The target qubit is measured to be in the $\ket{0}$ state with probability $\operatorname{Tr}[\Pi_G\rho]$ or in the $\ket{1}$ state with probability $\operatorname{Tr}[(\mathbb{I}-\Pi_G)\rho]$. Furthemore, the post-measurement state is proportional to $\Pi_G\rho\Pi_G$ or $(\mathbb{I}-\Pi_G)\rho(\mathbb{I}-\Pi_G)$, depending on the measurement outcome.}
\label{fig:uncomputing}
\end{figure}

Figure~\ref{fig:uncomputing} depicts the approach. Following the procedure above, the initial state of the algorithm after step one is $    \ket{0}_{C'}\ket{+_G}_C\ket{\psi}$ where
\begin{equation}\label{eq:plusstate}
    \ket{+_G}_C \coloneqq \frac{1}{\sqrt{|G|}} \sum_{g \in G} \ket{g}_{C} \, ,
\end{equation}
a choice of control state we will employ often. After step two, the overall state of the system is
\begin{equation}
     \frac{1}{\sqrt{|G|}} \sum_{g \in G} \ket{0}_{C'}\ket{g}_{C} U(g) \ket{\psi} \, .
\end{equation}

\begin{figure}
\begin{center}
\includegraphics[width=\columnwidth]{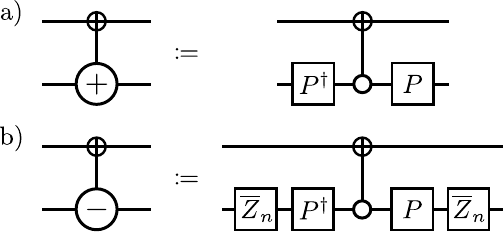}
\end{center}
\caption{For convenience, we define a plus or minus controlled gate. The gate $P$ prepares the state $\ket{+_G}$ on $n$ qubits; by convention, we say that applying a $Z$ gate to each qubit of the $\ket{+_G}$ state generates the $\ket{-_G}$ state. The controlled gate flips the target qubit if a) the control is in the $\ket{+_G}$ state or b) the control is in the $\ket{-_G}$ state, respectively. The operation $\overline{Z}_n$ is defined in~\eqref{eq:Zbar}. }
\label{fig:plusandminus}
\end{figure}

\noindent
Here, we depart from the procedure followed in~\cite{laborde2021testing}. Instead of measuring the control qubit after uncomputing, perform a controlled operation using the $\ket{+_G}$ as a control state instead of a $\ket{0}$ or $\ket{1}$ instance. This change of control is shown in Figure~\ref{fig:plusandminus}a), where we define a circuit diagram for use in future algorithms. 
After the controlled unitary in step three, the state of the system is
\begin{multline}\label{eq:uncomputestep3part1}
        \left[X_{C^{\prime}} \otimes \ket{+_G}\!\bra{+_G}_{C} + \mathbb{I}_{C^{\prime}} \otimes \left(\mathbb{I} -\ket{+_G}\!\bra{+_G}\right)_{C}\right]\  \times \\
        \frac{1}{\sqrt{|G|}}\sum_{g \in G} \ket{0}_{C'} \ket{g}_{C} U(g) \ket{\psi} \,.
\end{multline}

Before continuing, let us observe a few facts. First, building from the logic of the symmetry test depicted in Figure~\ref{fig:GBStest}, we know that
\begin{equation}\label{eq:projectorassertion}
    (\bra{+_G}_C \otimes \mathbb{I})\frac{1}{\sqrt{|G|}} \sum_{g \in G} \ket{g}_C \otimes U(g) = \Pi_G \, .
\end{equation}
Indeed, the algorithm presented in~\cite{laborde2021testing} depends on this fact. Furthermore, since $\Pi_G$ is a projector formed from the sum over the unitary representation, observe that 
\begin{equation}\label{eq:projectorfact}
    U(g) \Pi_G = U^\dag(g) \Pi_G = \Pi_G \, .
\end{equation}

Equipped with these facts, we can rewrite~\eqref{eq:uncomputestep3part1} as
\begin{multline}
    \ket{1}_{C'}\ket{+_G}_C\Pi_G \ket{\psi} - \ket{0}_{C'}\ket{+_G}_C\Pi_G\ket{\psi} \\
    + \frac{1}{\sqrt{|G|}}\sum_{g \in G} \ket{0}_{C'}\ket{g}_{C} U(g) \ket{\psi}\,. 
\end{multline}
From here, we progress to step four. Using the property in~\eqref{eq:projectorfact}, the state of the system after step four is
\begin{align}
    & \ket{1}_{C'}\ket{+_G}_C\Pi_G \ket{\psi} - \ket{0}_{C'}\ket{+_G}_C\Pi_G\ket{\psi} + \ket{0}_{C'}\ket{+_G}_C \ket{\psi} \notag \\
    & = \ket{1}_{C'}\ket{+_G}_C (\Pi_G \ket{\psi}) + \ket{0}_{C'}\ket{+_G}_C (\mathbb{I} - \Pi_G)\ket{\psi} \\
    & = \ket{+_G}_C \otimes \big[\ket{1}_{C'} (\Pi_G \ket{\psi}) + \ket{0}_{C'} (\mathbb{I} - \Pi_G)\ket{\psi} \big]\, .
\end{align}
Since the $C$ register is in tensor product with the other registers, it can be discarded. We will elect to do so, in order to clean up notation. This gives the final state of the system:
\begin{equation}\label{eq:uncomputefinal}
    \ket{1}_{C'}\Pi_G \ket{\psi} + \ket{0}_{C'}\left( \mathbb{I} - \Pi_G\right) \ket{\psi} \, .
\end{equation}
From~\eqref{eq:uncomputefinal}, we can clearly see that, if we measure register~$C'$ in the standard basis, then we will detect $C'$ to be in the state~$\ket{1}$ with probability
\begin{equation}\label{eq:uncomputesym}
    \Pr(\ket{1}_{C'}) = \left\Vert \Pi_G \ket{\psi} \right\Vert_2^2 \, ,
\end{equation}
with post-measurement state
\begin{equation}
    \frac{\Pi_G \ket{\psi}}{\left\Vert \Pi_G\ket{\psi} \right\Vert_2^2}\, .
\end{equation}
Furthermore, we will detect the state~$\ket{0}$ with probability
\begin{equation}\label{eq:asymmetric1}
    \Pr(\ket{0}_{C'}) = \left\Vert \left( \mathbb{I} - \Pi_G\right) \ket{\psi} \right\Vert_2^2 \, ,
\end{equation}
with post-measurement state
\begin{equation}
    \frac{\left( \mathbb{I} - \Pi_G\right) \ket{\psi}}{\left\Vert \left( \mathbb{I} - \Pi_G\right) \ket{\psi} \right\Vert_2^2}\, .
    \label{eq:post-meas-state-asym}
\end{equation}

The result in~\eqref{eq:asymmetric1}--\eqref{eq:post-meas-state-asym} can be understood as realizing the projection onto the asymmetric subspace. Thus, we have accomplished our task of generalizing the symmetric test to realize both projections simply by performing additional uncomputation. Previous implementations \cite{laborde2021testing} of this type of symmetry test gave an estimate of the projection onto the symmetric subspace by measuring the ancilla qubit. By adding additional ancilla qubits to the control register and then acting on the data register with the Hermitian conjugate of the symmetry testing algorithms, we have created a way to realize either the symmetric and asymmetric projections probabilistically. Furthermore, we know that the post-measurement state will either be  symmetric or asymmetric according to the outcome of the measurement.

\subsection{Realizing the Antisymmetric Projection}

A similar method can be employed for the antisymmetric test given in Section~\ref{sec:antisymmetry},  which gives analogous results. Define the antisymmetric projector for a group $G$ as
\begin{equation}
    \Pi_{\mathcal{A}} \coloneqq \frac{1}{|G|}\sum_{g\in G} \operatorname{sgn}(g) U(g) .
\end{equation}
This can be understood as a specific symmetry where the representation is defined as $\{\operatorname{sgn}(g) U(g)\}_{g\in G}$, that is, a representation identical to that of the symmetric group up to the sign of the group element. While the sign of the group element may seem nebulously defined, for finite groups there is a clear meaning, as previously stated. Cayley's theorem states that all groups are isomorphic to a subgroup of the full permutation group. The elements of a permutation group have a well-defined sign, as even permutations have sign $+1$ and odd-order permutations have sign $-1$. Thus, to determine the sign of a group element, it suffices to determine the sign of its corresponding permutation under Cayley's theorem.

For completeness, we now detail the procedure.

\begin{algorithm}
    The algorithm consists of the following steps:

\begin{enumerate}

\item Prepare the state $\ket{+_G}_{C}\coloneqq \frac{1}{\sqrt{\left\vert G\right\vert}}\sum_{g\in G}\ket{g}_{C}$.

\item Perform the controlled unitary $\sum_{g\in G}\ket{g}\!\bra{g}_{C}\otimes U(g)$ from the control register to the data register. 

\item Act on every control qubit, changing the sign of all odd-parity basis states to $-1$. This is equivalent to applying the unitary $\sum_{g \in G} \operatorname{sgn}(g)\ket{g}\!\bra{g}_{C}$.

\item Prepare another control qubit in the state 
$\ket{0}_{C^{\prime}}$ and perform the controlled unitary
\begin{equation} 
    X_{C^{\prime}} \otimes \ket{+_G}\!\bra{+_G}_{C} + \mathbb{I}_{C^{\prime}} \otimes \left(  \mathbb{I}-\ket{+_G}\!\bra{+_G}\right)_{C}.
\end{equation}

\item Act on the control qubits once more to change the sign of all odd-parity basis states (equivalent to applying the unitary $\sum_{g \in G} \operatorname{sgn}(g)\ket{g}\!\bra{g}_{C}$).

\item Perform the controlled unitary $\sum_{g\in G}\ket{g}\!\bra{g}_{C}\otimes U^{\dag}(g)$ from the control register to the data register.

\item Optional:\ 
discard the control register~$C$.

\item Measure the register $C^{\prime}$ in the computational basis.
\end{enumerate}
\end{algorithm}

The first two steps are identical to those used for the symmetric projection, and the state of the system afterward is given by
\begin{equation}
     \frac{1}{\sqrt{|G|}} \sum_{g \in G} \ket{0}_{C'}\ket{g}_{C} U(g) \ket{\psi} \, ,
\end{equation}
as before. Now, before moving on to the third step, define the state 
\begin{equation}\label{eq:minusstate}
    \ket{-_G} \coloneqq \frac{1}{\sqrt{|G|}} \sum_{g \in G} \operatorname{sgn}(g) \ket{g} \, .
\end{equation}
It is easily checked that, for permutation groups, $\overline{Z}_n \ket{+_G} =  \ket{-_G}$ when the control state on $n$ qubits is constructed in such a way that odd-parity basis states in the control are used to represent odd-parity permutations. As alluded to in Section~\ref{sec:antisymmetry}, any finite group can be thought of as a subgroup of a permutation group, and so this manner of creating $\ket{-_G}$ from $\ket{+_G}$ generalizes in principle. Indeed, this is a natural way of constructing the control register. If the identity group element $g = e$ is associated to $\ket{0}^{\otimes n}$ for $n$ ancilla qubits, then the first transposition $g = (1\ 2)$ in $S_2$ can be associated to $\ket{0}^{\otimes n-1} \otimes \ket{1}$, and so on for each additional transposition. This exactly follows the construction in Figure~\ref{fig:generalpermutation}.
Thus we can combine steps 3-5 into a single step where the effective controlled unitary is instead
\begin{equation}
X_{C^{\prime}} \otimes \ket{-_G}\!\bra{-_G}_{C} + \mathbb{I}_{C^{\prime}} \otimes \left(  \mathbb{I}-\ket{-_G}\!\bra{-_G}\right)_{C}.
\end{equation}
Note that this is completely equivalent to the $\operatorname*{C-antiSWAP}$ construction used previously in Section~\ref{sec:antisymmetry} for the antisymmetry test; the Pauli-$Z$ gates have merely been absorbed into the control state instead of the controlled unitary. Similar to how the $\ket{+_G}$ state was used to alter a controlled-NOT gate, this state will also be used in that manner. This is depicted in Figure~\ref{fig:plusandminus}b). 

After applying the new controlled unitary as described above, the state of the system is given by
\begin{multline}\label{eq:uncomputeAntiStep5}
    \left[ X_{C^{\prime}} \otimes \ket{-_G}\!\bra{-_G}_{C} + \mathbb{I}_{C^{\prime}} \otimes \left( \mathbb{I}-\ket{-_G}\!\bra{-_G}\right) _{C} \right]\  \times \\
    \frac{1}{\sqrt{|G|}}\sum_{g \in G} \ket{0}_{C'}\ket{g}_{C} U(g) \ket{\psi} ,
\end{multline}
which in turn simplifies to 
\begin{multline}
    \ket{1}_{C^{\prime}} \ket{-_G}_C \Pi_{\mathcal{A}} \ket{\psi} - \ket{0}_{C^{\prime}} \ket{-_G}_C \Pi_{\mathcal{A}} \ket{\psi} + \\
    \ket{0}_{C^{\prime}} \frac{1}{\sqrt{|G|}} \sum_{g\in G} \ket{g}_C U(g) \ket{\psi} \, .
\end{multline}
This simplification can be computed directly from the definition of the antisymmetric projector given in~\eqref{eq:antisymmetricprojector} and an analogous version of~\eqref{eq:projectorassertion} with respect to the antisymmetric projector:
\begin{equation}
    (\bra{-_G}_C \otimes \mathbb{I})\frac{1}{\sqrt{|G|}} \sum_{g \in G} \ket{g}_C \otimes U(g) = \Pi_{\mathcal{A}} \, .
\end{equation}
Next, we uncompute the data register via the controlled unitary $\sum_{g \in G} \ket{g}\!\bra{g}_C \otimes U^{\dag}(g)$. The state is then given by
\begin{multline}
     \ket{1}_{C'} \frac{1}{\sqrt{|G|}} \sum_{g \in G} \operatorname{sgn}(g) \ket{g} U^\dagger(g) \Pi_{\mathcal{A}} \ket{\psi} \\
     - \ket{0}_{C'} \frac{1}{\sqrt{|G|}} \sum_{g \in G} \operatorname{sgn}(g) \ket{g} U^\dagger(g) \Pi_{\mathcal{A}} \ket{\psi} \\
     + \ket{0}_{C^{\prime}} \frac{1}{\sqrt{|G|}} \sum_{g\in G} \ket{g}_C \ket{\psi}.
\end{multline}
Now, using the fact that 
\begin{equation}
    \operatorname{sgn}(g)U(g) \Pi_{\mathcal{A}} = \operatorname{sgn}(g)U^\dag(g) \Pi_{\mathcal{A}} = \Pi_{\mathcal{A}},
\end{equation}
the state simplifies to
\begin{equation}
    \ket{1}_{C'} \ket{+_G}_C \Pi_{\mathcal{A}} \ket{\psi} + 
    \ket{0}_{C'} \ket{+_G}_C (\mathbb{I} - \Pi_{\mathcal{A}}) \ket{\psi}.
\end{equation}
After this, we discard the control register $C$. Then the state of the system is
\begin{equation}
    \ket{1}_{C^{\prime}} \Pi_{\mathcal{A}} \ket{\psi} + \ket{0}_{C^{\prime}} (\mathbb{I}-\Pi_{\mathcal{A}})\ket{\psi} \, .
\end{equation}

We can now see that, if we measure register $C'$ in the standard basis, then we will detect  $C'$ to be in the state~$\ket{1}$ with probability
\begin{equation}
    \Pr(\ket{1}_{C'}) = \left\Vert \Pi_{\mathcal{A}} \ket{\psi} \right\Vert_2^2 \, ,
\end{equation}
and with post-measurement state
\begin{equation}
    \frac{\Pi_{\mathcal{A}}\ket{\psi}}{\left\Vert \Pi_{\mathcal{A}} \ket{\psi} \right\Vert_2^2}.
\end{equation}
Furthermore, we detect the state $\ket{0}$ with probability
\begin{equation}\label{eq:asymmetric}
    \Pr(\ket{0}_{C'}) = \left\Vert \left( \mathbb{I} - \Pi_{\mathcal{A}} \right) \ket{\psi} \right\Vert_2^2 \, ,
\end{equation}
and with post-measurement state
\begin{equation}
    \frac{\left( \mathbb{I} - \Pi_{\mathcal{A}} \right) \ket{\psi}}{\left\Vert \left( \mathbb{I} - \Pi_{\mathcal{A}} \right) \ket{\psi} \right\Vert_2^2}.
\end{equation}

\subsection{Generalizing the Construction}\label{sec:genconcat}

Let us now attempt to further generalize our previous constructions. In the prior sections, we were able to realize projections of a state onto a symmetric subspace by uncomputing the circuit used to realize the projector; however, we can imagine using the same procedure to combine the action of two different group projectors. We will refer to this procedure throughout as concatenating the projectors, in reference to how the projectors are essentially performed in series within a single algorithm. The projectors should act on the same number of qubits, but a projector that acts on a smaller system can essentially be padded to a larger one via tensoring with the identity.

Suppose that we have a unitary representation $\{V(h)\}_{h \in H}$ of another group $H$. We can then follow the exact procedure from Section~\ref{sec:uncomputing} only replacing the superposition state with 
\begin{equation}
    \ket{+_H}_C \coloneqq \frac{1}{\sqrt{|H|}}\sum_{h \in H}\ket{h}_C \, .
\end{equation}
Then the first controlled unitary, used in step two, becomes
\begin{equation}
    \sum_{h \in H}\ket{h}\!\bra{h}_C \otimes V(h) \, ,
\end{equation}
and the final controlled unitary, used in step four, becomes
\begin{equation}
    \sum_{h \in H}\ket{h}\!\bra{h}_C \otimes V^\dag(h) \, .
\end{equation}
Implementing these substitutions, the resultant state before measurement is
\begin{equation}
    \ket{1}_{C'}\Pi_H\ket{\psi} + \ket{0}_{C'} \left(\mathbb{I} - \Pi_H \right) \ket{\psi} \, ,
\end{equation}
where the projection $\Pi_H$ is defined as
\begin{equation}
    \Pi_H \coloneqq \frac{1}{|H|}\sum_{h \in H} V(h) \, .
\end{equation}

Let us now consider cascading the two procedures. In this construction, depicted in Figure~\ref{fig:TwoProjectorTestGeneral}, we first perform the approach for the group $G$ and then follow it with the circuit corresponding to $H$. To achieve this, two ancilla qubits are necessary, which we denote $C'_1$ and $C'_2$, respectively. (A note of caution: the group representations should act on Hilbert spaces of the same size, and if $|G| \neq |H|$, the control state on $C$ will have to be discarded after the first algorithm and cannot be reused.) The transformation realized by this concatenation is
\begin{multline}
    \ket{\psi} \to \ket{1}_{C'_1}\ket{1}_{C'_2} \Pi_H \Pi_G \ket{\psi} + \ket{1}_{C'_1}\ket{0}_{C'_2} \Pi_H \left( \mathbb{I} - \Pi_G \right)\ket{\psi} \\
    + \ket{0}_{C'_1}\ket{1}_{C'_2}\left( \mathbb{I} - \Pi_H \right) \Pi_G \ket{\psi} \\
    + \ket{0}_{C'_1}\ket{0}_{C'_2}\left( \mathbb{I} - \Pi_H \right) \left( \mathbb{I} - \Pi_G \right) \ket{\psi} \, .
\end{multline}

\begin{figure}
\begin{center}
\includegraphics[width=\columnwidth]{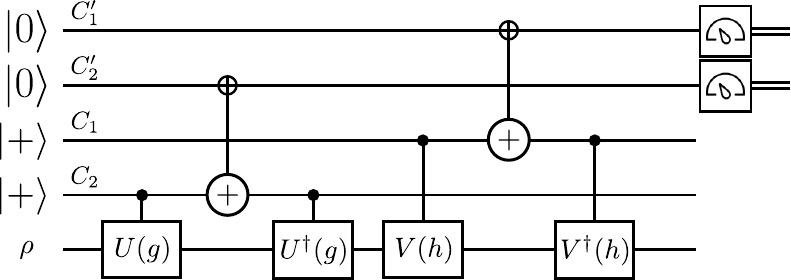}
\end{center}
\caption{Quantum circuit demonstrating the concatenation of two projectors in sequence. $\{U(g)\}$ is a unitary representation of one group and $\{V(h)\}$ is a unitary representation of another. Both unitary representations should act on spaces of the same dimension. 
}
\label{fig:TwoProjectorTestGeneral}
\end{figure}

\subsection{Examples}
\label{sec:examples}

\subsubsection{Concatenating Symmetric and Antisymmetric Projectors}

For a pertinent example, consider concatenating the approach from Section~\ref{sec:genconcat} for the symmetric and antisymmetric projectors of a group~$G$. Then $H = G$, and the two unitary representations can both be taken to be $\{U(\sigma)\}_{\sigma \in G}$. This allows us to use the same control register $C$ for both projectors. Then we can construct a simplified circuit as in Figure~\ref{fig:simplifiedSymmandAnti}, because the two controlled unitaries in the middle of Figure~\ref{fig:TwoProjectorTestGeneral} are inverses of each other and therefore cancel out (keeping in mind that there is a single control register $C$ in this special case). To illustrate this, we begin with an initial state given by
\begin{equation}\label{eq:initialsymantistate}
    \ket{00}_{C'C''}\ket{+_G}_C \ket{\psi} = \frac{1}{\sqrt{|G|}} \sum_{g\in G} \ket{00}_{C'C''}\ket{g}_C \ket{\psi} \, .
\end{equation}
In the second step we implement the controlled unitary $\sum_{g \in G} \ket{g}\!\bra{g}_C \otimes U(g)$ and the $\ket{+_G}_C$ controlled gate successively. This step realizes the state
\begin{align}
    \frac{1}{\sqrt{|G|}} &\sum_{g \in G} \ket{00}_{C'C''} \ket{g}_C U(g) \ket{\psi} \notag \\
    &\to \Bigg[ \big(X_{C'} \otimes \ket{+_G}\!\bra{+_G}_C \big) + \big( \mathbb{I}_{C'} \otimes (\mathbb{I}_{C}-\ket{+_G}\!\bra{+_G}_C) \big) \Bigg] \notag \\
    &\qquad \qquad  \frac{1}{\sqrt{|G|}} \sum_{g\in G} \ket{00}_{C'C''}\ket{g}_C U(g) \ket{\psi}  \\
    &= \ket{0}_{C''} \Bigg[ \ket{0}_{C'} \left ( \frac{1}{\sqrt{|G|}}\sum_{g\in G} \ket{g} U(g) \ket{\psi} - \ket{+_G} \Pi_\mathcal{S} \ket{\psi} \right) \notag \\
    &\qquad \qquad \qquad \qquad + \ket{1}_{C'} \ket{+_G}\Pi_\mathcal{S} \ket{\psi} \Bigg]. 
\end{align}
Note that the controlled-NOT gate acting on registers $C$ and~$C'$ essentially coherently couples the projection onto the symmetric subspace with the ancilla in register $C'$. The next step will act similarly with regard to the antisymmetric projection. By performing the $\ket{-_G}$ controlled gate from register $C$ to register $C''$, the state is now
\begin{align}
    &\ket{00}_{C'C''}  \Bigg( \innerprod{-_G}{+_G} \ket{-_G}_C \Pi_\mathcal{S} - \ket{+_G}_C\Pi_\mathcal{S} \notag\\
    &\qquad \qquad + \frac{1}{\sqrt{G}} \sum_{g\in G} \ket{g}_C U(g) - \ket{-_G}_C\Pi_\mathcal{A} \Bigg )\ket{\psi} \notag\\
    &+ \ket{01}_{C'C''} \ket{-_G}_C \left(\Pi_\mathcal{A}- \innerprod{-_G}{+_G} \Pi_\mathcal{S} \right) \ket{\psi} \notag\\
    &+\ket{10}_{C'C''} \left(\ket{+_G}_C \Pi_\mathcal{S} - \innerprod{-_G}{+_G} \ket{-_G}_C \Pi_\mathcal{S}\right) \ket{\psi} \notag \\
    &+\ket{11}_{C'C''} \ket{-_G}_C \innerprod{-_G}{+_G} \Pi_\mathcal{A}\Pi_\mathcal{S} \ket{\psi}.
\end{align}
Next define the following quantity 
\begin{equation}\label{eq:minusplusprod}
    G_p \coloneqq \innerprod{-_G}{+_G} = \frac{1}{|G|}\sum_{g \in G} \operatorname{sgn}(g) \, .
\end{equation}
Note that this quantity is equal to zero for the symmetric group $S_k$, but it is non-zero in some cases, e.g., for the cyclic group $Z_3$, for which every group element corresponds to an even permutation.

Finally, applying the last controlled unitary $\sum_{g \in G} \ket{g}\!\bra{g}_C \otimes U^\dagger(g)$, we get the following state:
\begin{multline}
\label{eq:final_state_+-}
    \ket{00}_{C'C''} \left[ \ket{+_G} \mathbb{I} - \ket{+_G} \Pi_\mathcal{A} -\ket{+_G} \Pi_\mathcal{S} + G_p \ket{-_G} \Pi_\mathcal{S}\right] \ket{\psi} \\
    + \ket{01}_{C'C''} \left[ \ket{+_G} \Pi_\mathcal{S} - G_p \ket{-_G} \Pi_\mathcal{S}\right] \ket{\psi} \\
     \qquad +\ket{10}_{C'C''} \left[ \ket{+_G} \Pi_\mathcal{A} - G_p \ket{-_G} \Pi_\mathcal{S}\right] \ket{\psi} \\
    + \ket{11}_{C'C''} \left[ G_p \ket{-_G} \Pi_\mathcal{S}\right] \ket{\psi}.
\end{multline}

In the case that $G$ is the symmetric group $S_k$, it follows that $G_p = \innerprod{-_G}{+_G} = 0$. Substituting for $G_p$, we find that the final state is then given by
\begin{multline}
    \ket{00}_{C'C''} \ket{+_G} \left[ \mathbb{I} - \Pi_\mathcal{A} - \Pi_\mathcal{S} \right] \ket{\psi} \\
    + \ket{01}_{C'C''} \ket{+_G} \Pi_\mathcal{S}  \ket{\psi} \\
     \qquad +\ket{10}_{C'C''}  \ket{+_G} \Pi_\mathcal{A}  \ket{\psi}.
     \label{eq:final-state-sym-anti}
\end{multline}
Lastly, also in the case that $G = S_k$, we find the following probabilities upon measuring the state of the $C'$ and $C''$ qubits:
\begin{align}
    P(\ket{00}_{C'C''}) &= \left \Vert (\mathbb{I}-\Pi_{\mathcal{S}}-\Pi_{\mathcal{A}})\ket{\psi} \right \Vert_2^2\, ,\label{eq:final-probs-sym-anti-1}\\
    P(\ket{01}_{C'C''}) &=  \left \Vert\Pi_{\mathcal{A}} \ket{\psi} \right \Vert_2^2 \, ,\\
    P(\ket{10}_{C'C''}) &=  \left \Vert\Pi_{\mathcal{S}} \ket{\psi} \right \Vert_2^2 \, ,\\
    P(\ket{11}_{C'C''}) &= 0  \, .
    \label{eq:final-probs-sym-anti-4}
\end{align}
Note that the probability of measuring the state $\ket{11}$ corresponds to measuring $\Pi_{\mathcal{S}}\Pi_{\mathcal{A}}\ket{\psi}$, which is always equal to zero in the case that $G = S_k$.

The specific example of $S_3$, the symmetric group on three letters, is described at length in Section~\ref{sec:ex1}, the next section.

\begin{figure}
\begin{center}
\includegraphics[width=\columnwidth]{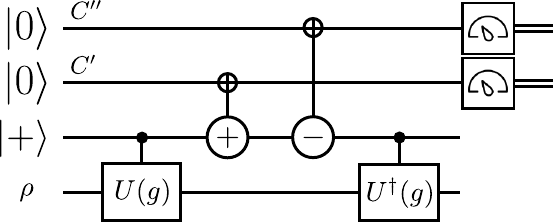}
\end{center}
\caption{The above circuit first performs the symmetric projection, followed by the antisymmetric projection. }
\label{fig:simplifiedSymmandAnti}
\end{figure}

\subsubsection{Three Irreps of \texorpdfstring{$S_3$}{S3}}

\label{sec:ex1} 

In this example, we recover the projections onto each of the three irreps of the group $S_3$ by concatenating the symmetric and antisymmetric projection circuits. As discussed in the previous section, both projectors utilize the same representations, $\{U(g)\}_{g \in G} = \{V(h)\}_{h \in H} $. This allows for a simplification of the middle portion of the algorithm, where 
\begin{equation}
    U^\dag(g) V(h) = U^\dag (g) U (g) = \mathbb{I} \,,
\end{equation}
as $h = g$ for this specific combination of projectors as long as the same initial control state $\ket{+_G}$ is used. This simplifies the resultant circuit, as the intermediary unitaries cancel. Indeed, such a simplification can be employed anytime the two projectors utilize the same unitaries in their representation. Since $S_3$ is a particular symmetric group, it follows that the final state before measurement is given by~\eqref{eq:final-state-sym-anti} and the final measurement probabilities are given by~\eqref{eq:final-probs-sym-anti-1}--\eqref{eq:final-probs-sym-anti-4}.

It is worth remembering, at this juncture, that all of these algorithms are necessarily dependent on the representation that is employed, even though this is often suppressed in the literature. For example, if we take the representation of $S_3$ that maps each transposition to a SWAP gate between two-dimensional Hilbert spaces (i.e., $(1\ 2 ) \to \operatorname{SWAP}_{A_1A_2}$), then the antisymmetric projector here is exactly zero. Indeed, there are no three-qubit states that demonstrate such antisymmetry. To guarantee a nontrivial antisymmetric subspace under these types of representations, we require that the dimension $d$ of the individual systems is greater than or equal to the number of systems to be permuted, $n$, because the size of the antisymmetric subspace is then given by $\binom{d}{n}$. Thus, using qudits, or systems of more than one qubit with controlled-SWAPs replaced by a concatenation of several controlled-SWAPs with the same control qubit, would give a nontrivial antisymmetric projector. 

However, other choices of representations can also yield nontrivial results. We choose a two-qubit unitary representation to demonstrate this in the example below.
In principle, any group representation that respects the group action and the sign representation should be adaptable to these algorithms.
Consider the representation of $S_3$ on two qubits that maps the transposition $(1\ 2 ) $ to the Hadamard gate on both qubits and $(2\ 3 )$ to applying a SWAP gate, CNOT gate, and then the Hadamard gate on both qubits. This is equivalent to mapping $ (1\ 2\ 3)$  to the operation $\operatorname{CNOT} \cdot \operatorname{SWAP}$, as shown in Table~\ref{tab:exampleS3}. A quick check can verify that this obeys the group action of~$S_3$. Then the symmetric projector is
\begin{equation}
    \Pi_{\mathcal{S}}=   \frac{1}{4}  
    \begin{pmatrix}
3 & 1 & 1 & 1 \\
1 & \sfrac{1}{3} & \sfrac{1}{3} & \sfrac{1}{3} \\
1 & \sfrac{1}{3} & \sfrac{1}{3} & \sfrac{1}{3} \\
1 & \sfrac{1}{3} & \sfrac{1}{3} & \sfrac{1}{3} \\
\end{pmatrix}\, , 
\end{equation}
and the antisymmetric projector is 
\begin{equation}
    \Pi_{\mathcal{A}}=   \frac{1}{4}  
    \begin{pmatrix}
1 & -1 & -1 & -1 \\
-1 & 1 & 1 & 1 \\
-1 & 1 & 1 & 1 \\
-1 & 1 & 1 & 1 \\
\end{pmatrix}\, . 
\end{equation}

\begin{table}
    \centering
    \begin{tabular}{P{1.5cm} | P{3.5cm}}
    \hline
         $\sigma \in S_3$ & $U(\sigma)$  \\ \hline \hline
         e & $\mathbb{I}$ \\ \hline
         $(1\ 2)$ & $H \otimes H$ \\ \hline 
         $(2\ 3)$ & $(H \otimes H) \operatorname{CNOT} \cdot \operatorname{SWAP}$ \\ \hline
         $(1\ 3)$ & $(H \otimes H) \operatorname{SWAP}\cdot \operatorname{CNOT}$ \\ \hline
         $(1\ 2\ 3)$ & $\operatorname{CNOT}\cdot \operatorname{SWAP}$ \\ \hline
         $(1\ 3\ 2)$ & $\operatorname{SWAP}\cdot \operatorname{CNOT}$ \\ \hline
    \end{tabular}
    \caption{An example representation of $S_3$ on two qubits.}
    \label{tab:exampleS3}
\end{table}

Thus, in this example, we can estimate the projection of the state $\ket{\psi}$ onto any of the three irreps of $S_3$. For larger symmetric groups, additional projectors could be added to sequentially project onto different subspaces.

\subsubsection{Difference of Two Projectors}
\label{ex:difference}

Consider a unitary that can be written as the difference of two projectors, $U = P - Q$, where $P$ and $Q$ are projectors. A specific example of this is the SWAP gate, which can be written as SWAP $= \Pi_\mathcal{S} - \Pi_\mathcal{A}$ and which we will discuss at more length in Section~\ref{sec:werner} in the context of determining Werner-state asymmetry of a bipartite state. Here, we will argue that such a unitary can be used to estimate the quantity $\left \Vert \operatorname{Re}(P \rho Q)\right \Vert_2^2$ for some state of interest $\rho$ and that the circuit used to do so can be constructed using our antisymmetry test perspective.

Suppose we have a completeness relation $\mathbb{I} = P + Q$; then, by simple substitution, we can see that $P = \frac{1}{2}(\mathbb{I} + U)$. We can then realize this projector by performing a controlled unitary. We use this fact to construct the circuit given in Figure~\ref{fig:Diff-Proj}, which we claim can estimate the quantity of interest, $\left \Vert \operatorname{Re}(P \rho Q)\right \Vert_2^2$.

\begin{figure}
\begin{center}
\includegraphics[width=\columnwidth]{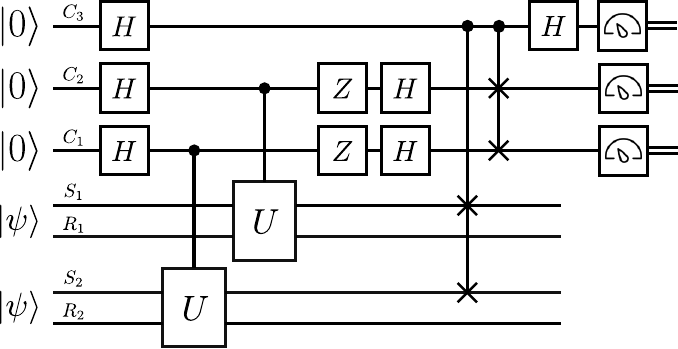}
\end{center}
\caption{Construction to estimate $\left \Vert \operatorname{Re}(P \rho Q)\right \Vert_2^2$, where $U=P-Q$.
}
\label{fig:Diff-Proj}
\end{figure}

This circuit uses an antisymmetry test as a subroutine. Consider the set $\{V(h)\}_{h\in H} = \{\mathbb{I}\otimes\mathbb{I},\mathbb{I}\otimes U, U \otimes \mathbb{I}, U \otimes U\}$ where $H=\{00,01,10,11\}$ corresponds to the necessary control-register states. Although we have not made the assumptions necessary to ensure that $\{V(h)\}_{h\in H}$ is a group representation, we can construct the circuit using the same principles regardless. For the construction, we assume that the state of interest $\rho_S$ is provided in the form of a purification  $\ket{\psi}\!\bra{\psi}_{SR}$, i.e., so that $\rho_S = \operatorname{Tr}_{R}[\ket{\psi}\!\bra{\psi}_{SR}]$.

The initial state of the system is given by
\begin{align}
\ket{+, +, +}_{C_1C_2C_3} \ket{\psi}_{S_1 R_1}\ket{\psi}_{S_2 R_2} \, ,
\end{align}
with $\ket{+}= \frac{1}{\sqrt{2}} (\ket{0}+\ket{1})$. However, we would like to frame this algorithm as an antisymmetry test, and so we instead write the initial state as
\begin{align}
    \ket{+}_{C_3}\ket{+_H}\ket{\Psi} \, ,
\end{align}
where $\ket{\Psi} \coloneqq  \ket{\psi}_{S_1 R_1}\otimes \ket{\psi}_{S_2 R_2}$ and 
\begin{equation}
    \ket{+_H} = \frac{1}{\sqrt{|H|}}\sum_{h\in H}\ket{h}\, .
\end{equation}
The action of the antisymmetric subroutine is thus as follows:
\begin{align}
    \ket{+}_{C_3}\ket{+_H}\ket{\Psi} &
    &\to \frac{\ket{+}_{C_3}}{2}\sum_{h\in H}\operatorname{sgn}(h)\ket{h} V(h) \ket{\Psi}\, , \label{eq:pqstep1}
\end{align}
noting that $|H| = 4$ by construction. Substituting the definitions above into~\eqref{eq:pqstep1} gives us
\begin{multline}
        = \frac{\ket{+}_{C_3}}{2} \Big( \ket{00} ( \ket{\psi}_{S_1R_1} \otimes\ket{\psi}_{S_2R_2}) 
        \\ - \ket{01} ( \ket{\psi}_{S_1R_1} \otimes U\ket{\psi}_{S_2R_2} ) 
        \\ - \ket{10} (U \ket{\psi}_{S_1R_1} \otimes \ket{\psi}_{S_2R_2}) 
        \\ + \ket{11} ( U \ket{\psi}_{S_1R_1} \otimes U\ket{\psi}_{S_2R_2} )  \Big) \, .
\end{multline}
Next, we act on qubits $C_1$ and $C_2$ with Hadamard gates, seen in Figure~\ref{fig:Diff-Proj}, leading to the state:
\begin{multline}\label{eq:pqstep2}
    \frac{\ket{+}_{C_3}}{4} \Bigg( \Big( \ket{0}_{C_1} ( \mathbb{I} - U) \ket{\psi}_{S_1R_1} + \ket{1}_{C_1}( \mathbb{I} + U) \ket{\psi}_{S_1R_1}) \Big)
        \\\otimes \Big( \ket{0}_{C_2} ( \mathbb{I} - U) \ket{\psi}_{S_2R_2} + \ket{1}_{C_2}( \mathbb{I} + U) \ket{\psi}_{S_2R_2} \Big) \Bigg) \, .
\end{multline}
Recall that $P = \frac{1}{2}\left(\mathbb{I} + U\right)$ and thus by the completeness relation, $Q= \frac{1}{2}\left(\mathbb{I} - U\right)$. This allows us to rewrite~\eqref{eq:pqstep2} as
\begin{multline}\label{eq:ex2aftersubroutine}
    \ket{+}_{C_3}  \otimes  \left ( \ket{0}_{C_1} Q \ket{\psi} + \ket{1}_{C_1} P \ket{\psi}\right ) \\
    \otimes \left ( \ket{0}_{C_2} Q \ket{\psi} + \ket{1}_{C_2} P \ket{\psi} \right ) .
\end{multline}

Now we move on to the second half of the circuit, which realizes a SWAP test between the systems $S_1$ and $S_2$ using $C_3$ as the control. The effect of these controlled-SWAPs on the state of~\eqref{eq:ex2aftersubroutine} is given by
\begin{align}
    &\frac{1}{\sqrt{2}} \Big[ \ket{000} [Q\ket{\psi} \otimes Q\ket{\psi}] + \ket{001} [Q\ket{\psi} \otimes P\ket{\psi}] \notag \\
    &\quad + \ket{010} [P\ket{\psi} \otimes Q\ket{\psi}] + \ket{011} [P\ket{\psi} \otimes P\ket{\psi}] \notag \\
    &\quad + \ket{100} F[Q\ket{\psi} \otimes Q\ket{\psi}] + \ket{101} F[Q\ket{\psi} \otimes P\ket{\psi}] \notag \\
    &\quad + \ket{110} F[P\ket{\psi} \otimes Q\ket{\psi}] + \ket{111} F[P\ket{\psi} \otimes P\ket{\psi}]\, ,
\end{align}
with $F \coloneqq F_{S_1 S_2}$ the unitary swap operator.
Next, we apply the final controlled-SWAP gate from $C_3$ to $C_1$ and $C_2$ and the Hadamard gate on $C_3$. The state just before the final measurements is then
\begin{align}
\label{eq:diff-prof-pre-trace}
    \ket{\Phi} &\coloneqq  \frac{1}{2} \Big[ 
              \ket{000} [Q\ket{\psi} \otimes Q\ket{\psi} + F[Q\ket{\psi} \otimes Q\ket{\psi}]] \notag \\
    &\qquad + \ket{001} [Q\ket{\psi} \otimes P\ket{\psi} + F[P\ket{\psi} \otimes Q\ket{\psi}]] \notag \\
    &\qquad + \ket{010} [P\ket{\psi} \otimes Q\ket{\psi} + F[Q\ket{\psi} \otimes P\ket{\psi}]] \notag \\
    &\qquad + \ket{011} [P\ket{\psi} \otimes P\ket{\psi} + F[P\ket{\psi} \otimes P\ket{\psi}]] \notag \\
    &\qquad + \ket{100} [Q\ket{\psi} \otimes Q\ket{\psi} - F[Q\ket{\psi} \otimes Q\ket{\psi}]] \notag \\
    &\qquad + \ket{101} [Q\ket{\psi} \otimes P\ket{\psi} - F[P\ket{\psi} \otimes Q\ket{\psi}]] \notag \\
    &\qquad + \ket{110} [P\ket{\psi} \otimes Q\ket{\psi} - F[Q\ket{\psi} \otimes P\ket{\psi}]] \notag \\
    &\qquad + \ket{111} [P\ket{\psi} \otimes P\ket{\psi} - F[P\ket{\psi} \otimes P\ket{\psi}]] \Big].
\end{align}
This state can be written more compactly as 
\begin{multline}
    \ket{\Phi} = \frac{1}{2} \sum_{a,b,c\in\{0,1\}} \ket{abc} \left[ P_b \ket{\psi} \otimes P_c \ket{\psi} \right.\\
    \left. + (-1)^a F[P_c \ket{\psi} \otimes P_b \ket{\psi}]\right],
\end{multline}
where we define $P_0 \coloneqq Q$ and $P_1 \coloneqq P$. Measuring the first three qubits $C_3 C_2 C_1$, the probability $p(abc)$ of measuring the bitstring $abc$ is given by
\begin{align}
    & p(abc) \notag \\
    &= \left\Vert (\bra{abc}_{C_3 C_2 C_1} \otimes \mathbb{I}_{S_1 R_1 S_2 R_2}) \ket{\Phi} \right\Vert^2_2 \\
    &= \frac{1}{2} [\Tr{P_b \rho}\Tr{P_c \rho} + (-1)^a \Tr{P_b \rho P_c \rho}].
\end{align}
Thus, we can take the difference between some of the probabilities  to find that 
\begin{align}
    p(000) - p(100) &= \Tr{Q\rho Q\rho}, \label{eq:p000-p100} \\
    p(001) - p(101) &= \Tr{Q\rho P\rho}, \label{eq:p001-p101} \\
    p(010) - p(110) &= \Tr{P\rho Q\rho} = \Tr{Q\rho P\rho}, \label{eq:p010-p110}  \\
    p(011) - p(111) &= \Tr{P\rho P\rho}.
    \label{eq:p011-p111}
\end{align}
Next, we show that this is simply related to the quantity of interest: $\left \Vert \operatorname{Re}(P \rho Q)\right \Vert_2^2$. Consider that
\begin{equation}
    \left \Vert \operatorname{Re}(X) \right\Vert^2_2 = \frac{1}{4} \operatorname{Tr}[X^2 + XX^\dagger + X^\dagger X + (X^\dagger)^2].
\end{equation}
Substituting for $X$ with $P \rho Q$, we find that
\begin{align}
    &\left\Vert \operatorname{Re}(P \rho Q) \right\Vert_2^2 \notag \\
    &= \frac{1}{4} \Big[ \Tr{QP\rho QP\rho} + \Tr{PQ\rho PQ\rho} + 2\Tr{P\rho Q\rho} \Big] \notag \\
    &= \frac{1}{2} \Tr{P\rho Q\rho},
    \label{eq:real-part-norm-calc}
\end{align}
where the second equality results from the fact that $PQ=QP=0$. 
Combining both results, we see that
\begin{align}
    \left \Vert \operatorname{Re}(P \rho Q)\right \Vert_2^2 & = \frac{1}{2} (p(001) - p(101))\\
    & = \frac{1}{2} (p(010) - p(110)).
\end{align}

Thus, by repeating the circuit multiple times and building measurement statistics, we can estimate $\left \Vert \operatorname{Re}(P \rho Q)\right \Vert_2^2$. Additionally, by collecting both the measurement results from~\eqref{eq:p001-p101} and~\eqref{eq:p010-p110}, we can converge to the desired quantity of $\left \Vert \operatorname{Re}(P \rho Q)\right \Vert_2^2$ faster than if we were sampling only a single measurement outcome. In more detail, for all $t\in \{1, \ldots, T\}$, let $Y_t$ be a random variable that is set to $0$ after obtaining the results 000, 100, 011, or 111, $+1$ after obtaining the results 001 or 010, and $-1$ after obtaining the results 101 or 110. Then, it follows from~\eqref{eq:p000-p100}--\eqref{eq:p011-p111} and~\eqref{eq:real-part-norm-calc} that
\begin{equation}
\mu \equiv \mathbb{E}[Y_t] = 2 \Tr{P\rho Q\rho} = 4 \left \Vert \operatorname{Re}(P \rho Q)\right \Vert_2^2,
\label{eq:mu-mean}
\end{equation}
so that $\overline{Y_T} \coloneqq \frac{1}{T}\sum_{t=1}^T Y_t$ is an unbiased estimator of $4 \left \Vert \operatorname{Re}(P \rho Q)\right \Vert_2^2$. Set $\varepsilon > 0$ and $\delta \in (0,1)$. The Hoeffding bound (see, e.g., \cite[Lemma~1]{rethinasamy2023quantum})  implies that, if $T \geq \frac{2}{\varepsilon^2} \ln \!\left( \frac{2}{\delta}\right)$, then $\Pr[\vert \overline{Y_T} - \mu \vert \leq \varepsilon] \geq 1-\delta$.

\subsubsection{Resolution of Identity}

\label{ex:res_identity}

Consider the $r$-term resolution of identity, given by 
\begin{equation}
    \mathbb{I} = P_0 + \cdots + P_{r-1},
\end{equation}
where each $P_i$ is a projector.
In this example, we provide an algorithm, and an antisymmetry perspective, to estimate terms of the form 
\begin{equation}
    \left\Vert \operatorname{Re}(P_i \rho P_j) \right\Vert^2_2.
\end{equation}
For $i \neq j$, this quantity measures the component of the state~$\rho$ in the off-block-diagonal subspaces. For example, using projectors onto subspaces of fixed Hamming weight, it was shown that quantitites of this form can be used to test if a state is invariant under collective phase rotations \cite{laborde2021testing}. More generally, estimating terms of this form can measure coherences, or lack thereof, which is relevant in a range of settings, particularly in thermodynamics \cite{yuan2020direct,rodríguezrosario2013thermodynamicsquantumcoherence} and coherence distillation \cite{Liu2019Distillation,Regula2018distill}. 

Using a similar argument as in the previous section, it follows that, when $i \neq j$,
\begin{equation}
\label{eq:cross-term}
    \left\Vert \operatorname{Re}(P_i \rho P_j) \right\Vert^2_2 = \frac{1}{2} \Tr{P_i \rho P_j \rho}.
\end{equation}
Define the following unitary $U$:
\begin{equation}
\label{eq:def-U-res-I}
    U \coloneqq \sum_{j=0}^{r-1} \exp{\left(\frac{2\pi \mathrm{i} j}{r}\right)} P_j.
\end{equation}
We also define the following operation $\tilde{Z}$ as follows:
\begin{equation}
\label{eq:Ztilde}
    \tilde{Z} \coloneqq \sum_{c=0}^{r-1} \exp{\left(\frac{2\pi \mathrm{i} c}{r}\right)} \outerproj{c}.
\end{equation}

\begin{figure}
\begin{center}
\includegraphics[width=\columnwidth]{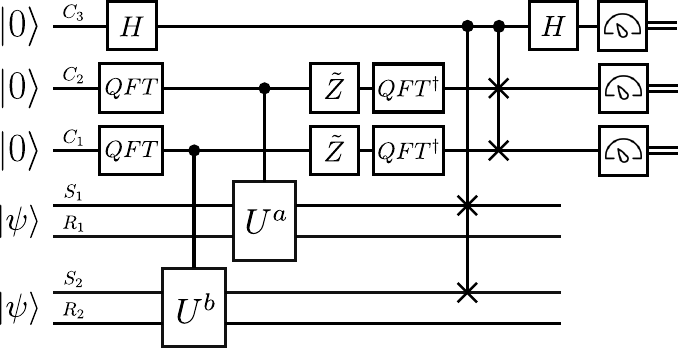}
\end{center}
\caption{Construction to estimate $\left \Vert \operatorname{Re}(P_i \rho P_j)\right \Vert_2^2$, where $U$ is defined in~\eqref{eq:def-U-res-I} and the unitary $\tilde{Z}$ is defined in~\eqref{eq:Ztilde}.}
\label{fig:Res-identity}
\end{figure}

The circuit uses an antisymmetry test as a subroutine. Consider the set
\begin{equation}
\{V(h)\}_{h \in H} = \{U^a \otimes U^b\ \vert\ a,b \in \{0, \ldots, r-1\}\},
\end{equation}
 where $H = \{a,b\ \vert\ a,b\in \{0, \ldots, r-1\}\}$,    
corresponding to the necessary control register states. Although we have not made the assumptions necessary to ensure $\{V(h)\}_{h\in H}$ is a group representation, we can construct the circuit using the same principles regardless. For the construction, we assume that the state of interest $\rho_S$ is provided in the form of a purification $\ket{\psi}\!\bra{\psi}_{SR}$, i.e., such that  $\rho_S = \operatorname{Tr}_{R}[\ket{\psi}\!\bra{\psi}_{SR}]$.

For the circuit, we append three ancilla registers --- $C_3$, which is a qubit, and $C_2$ and $C_1$, which are $r$-level systems. As seen in Figure~\ref{fig:Res-identity}, we measure Pauli-$Z$ on the first qubit and assign the value of $+1$ if the outcome zero occurs, and $-1$ if the outcome one occurs, and we measure the other two $r$-level systems in the computational basis. The different expected values are given by
\begin{equation}\label{eq:resolution-of-identity-end-result}
    p(0ab) - p(1ab) = \Tr{P_{a-1}\rho P_{b-1}\rho},
\end{equation}
where the subtraction in $a-1$ and $b-1$ is modulo $r$.
We provide the state simulation and the proof of this result in Appendix~\ref{app:proof-res-identity}. Thus, when $a\neq b$, this circuit measures the cross terms as desired. Furthermore, we see that this is a generalization of Example~\ref{ex:difference}, where we set $r=2$, $P_0 = P$ and $P_1 = Q$. Using~\eqref{eq:cross-term}, we can estimate all the cross terms from the statistics of the different measurement outcomes.

\section{Applications}
\label{sec:applications}

\subsection{Estimating Schmidt Rank}

The Schmidt rank of a pure state can be estimated by utilizing a projection onto the antisymmetric subspace~\cite{JLV22}. In this section, we will briefly review Schmidt rank and the aforementioned result before giving example computations of this application.

The Schmidt decomposition of a bipartite pure state is a superposition of coordinated orthonormal states. We state the theorem more formally as follows.
\begin{theorem}[Schmidt Decomposition]
Suppose that we have a bipartite pure state, \begin{equation}
    \vert \psi \rangle_{RS} \in \mathcal{H}_R \otimes \mathcal{H}_S,
\end{equation}
where $\mathcal{H}_R$ and $\mathcal{H}_S$ are finite-dimensional Hilbert spaces, not necessarily of the same dimension, and $\left\Vert \vert \psi \rangle_{RS} \right\Vert_2 = 1$. Then it is possible to express this state as follows: 
\begin{equation}
    \vert \psi \rangle_{RS} = \sum_{i=0}^{r-1} \lambda_i \vert v_i \rangle_R \vert w_i \rangle_S,
\end{equation}
where the amplitudes $\lambda_i$ are real, strictly positive, and normalized so that $\sum_i \lambda_i^2 =1$, the states $\{\vert v_i \rangle_R\}_{i=0}^{r-1}$ form an orthonormal basis in system $R$, and the states $\{\vert w_i \rangle_S\}_{i=0}^{r-1}$ form an orthonormal basis in system $S$. The Schmidt rank $r$ of a bipartite state is equal to the number of Schmidt coefficients in its Schmidt decomposition; i.e., $r = |\{\lambda_i\}_{i=0}^{r-1}|$.
\end{theorem}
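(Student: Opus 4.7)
The plan is to derive the Schmidt decomposition from the singular value decomposition (SVD) of matrices, which is the standard and cleanest route. First I would pick arbitrary orthonormal bases $\{\ket{j}_R\}$ and $\{\ket{k}_S\}$ of $\mathcal{H}_R$ and $\mathcal{H}_S$ and expand $\ket{\psi}_{RS} = \sum_{j,k} c_{jk}\,\ket{j}_R \ket{k}_S$. The coefficients $c_{jk}$ form a (generally rectangular) complex matrix $C$ of size $\dim\mathcal{H}_R \times \dim\mathcal{H}_S$. The whole content of the theorem is a statement about $C$.

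Next I would invoke the SVD as a known linear-algebra fact: there exist unitary matrices $U$ and $V$ (square, acting on each side's full space) and a diagonal matrix $D$ of the appropriate rectangular shape such that $C = U D V^{\dagger}$, where the diagonal entries $\lambda_0, \ldots, \lambda_{r-1}$ of $D$ are strictly positive, $r$ is the rank of $C$, and all other diagonal entries are zero. Substituting this factorization into the expansion gives
\begin{equation}
    \ket{\psi}_{RS} = \sum_{i=0}^{r-1} \lambda_i \left(\sum_j U_{ji}\ket{j}_R\right)\!\left(\sum_k (V^{\dagger})_{ik}\ket{k}_S\right).
\end{equation}
Defining $\ket{v_i}_R \coloneqq \sum_j U_{ji}\ket{j}_R$ and $\ket{w_i}_S \coloneqq \sum_k \overline{V_{ki}}\,\ket{k}_S$ yields exactly the claimed form $\ket{\psi}_{RS} = \sum_{i=0}^{r-1}\lambda_i \ket{v_i}_R\ket{w_i}_S$.

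It then remains to verify the three side claims. Orthonormality of $\{\ket{v_i}_R\}$ and $\{\ket{w_i}_S\}$ follows immediately from unitarity of $U$ and $V$, since $\innerprod{v_i}{v_{i'}} = \sum_j \overline{U_{ji}} U_{ji'} = (U^{\dagger}U)_{ii'} = \delta_{ii'}$, and analogously for the $\ket{w_i}_S$. The normalization $\sum_i \lambda_i^2 = 1$ follows from
\begin{equation}
    1 = \left\|\ket{\psi}_{RS}\right\|_2^2 = \sum_{j,k}|c_{jk}|^2 = \operatorname{Tr}[C^{\dagger} C] = \operatorname{Tr}[D^{\dagger} D] = \sum_{i=0}^{r-1}\lambda_i^2,
\end{equation}
where the third-to-last equality uses unitary invariance of the Frobenius norm. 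Finally, strict positivity of the $\lambda_i$ is automatic from the definition of $r$ as the rank of $C$.

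The only genuine obstacle is bookkeeping: making sure that the complex conjugation introduced by $V^{\dagger}$ is absorbed consistently into the definition of $\ket{w_i}_S$, and that one invokes the ``thin'' or rank-truncated version of the SVD so that the sum runs up to $r-1$ rather than over a longer range where most summands vanish. Every other step is either a definition or a direct application of unitarity, so no deeper argument is needed beyond citing the existence of the SVD.
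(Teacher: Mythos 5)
Your proof is correct and complete: the SVD route is the standard argument for the Schmidt decomposition, and every step (the substitution $C = UDV^{\dagger}$, orthonormality from unitarity, normalization from unitary invariance of the Frobenius norm, and strict positivity from the rank-$r$ truncation) checks out. The paper itself does not prove this theorem --- it simply restates the classical result before applying it to Schmidt-rank estimation --- so there is no in-paper argument to compare against; your derivation is the canonical one. The only cosmetic remark is that $\{\vert v_i\rangle_R\}_{i=0}^{r-1}$ and $\{\vert w_i\rangle_S\}_{i=0}^{r-1}$ are orthonormal \emph{sets} rather than full bases unless $r$ equals the respective dimension, a looseness already present in the theorem statement rather than in your proof.
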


The Schmidt decomposition theorem applies to any bipartite cut of a pure quantum state, and the Schmidt rank $r$ obeys the following inequality
\begin{equation}
    r \leq \min\{\operatorname{dim}(\mathcal{H}_R), \operatorname{dim}(\mathcal{H}_S)\}.
\end{equation}
It is well known that the Schmidt rank can be used to decide if a state is entangled. The Schmidt rank $r>1$ if and only if the state is entangled. Another application of estimating the Schmidt rank, in the context of quantum machine learning, can found in~\cite{JLRW23}.

The result from~\cite{JLV22} is as follows:
\begin{theorem}[\cite{JLV22}]
    Suppose $ \vert \psi \rangle_{RS} \in \mathcal{H}_R \otimes \mathcal{H}_S$ is a bipartite state vector. Then $\operatorname{SR}(\vert \psi \rangle) \leq r$ if and only if 
    \begin{equation}
        \left(\Pi^{\mathcal{A}}_{R, r+1} \otimes \Pi^{\mathcal{A}}_{S, r+1} \right) (\vert \psi \rangle_{RS}^{\otimes r+1}) = 0,
    \end{equation}
    where $\Pi^{\mathcal{A}}_{R, r+1}$ is the projection onto the antisymmetric subspace of $\mathcal{H}^{\otimes (r+1)}_R$ (and similarly for the $S$ system).
\end{theorem}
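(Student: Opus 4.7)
The plan is to prove both directions by expanding $|\psi\rangle^{\otimes r+1}$ in a Schmidt basis and tracking how the two-sided antisymmetrizer acts on the resulting tensor-product basis.

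First I would write the Schmidt decomposition $|\psi\rangle_{RS} = \sum_{i=0}^{r'-1} \lambda_i |v_i\rangle_R |w_i\rangle_S$ with $r' = \operatorname{SR}(|\psi\rangle)$, $\lambda_i > 0$, and $\{|v_i\rangle\}$, $\{|w_i\rangle\}$ orthonormal. Taking the $(r+1)$-fold tensor power, the state becomes
\begin{equation}
|\psi\rangle_{RS}^{\otimes r+1} = \sum_{I \in [r']^{r+1}} \lambda_I \, |v_I\rangle_R \otimes |w_I\rangle_S,
\end{equation}
with $I = (i_1, \ldots, i_{r+1})$, $\lambda_I \coloneqq \prod_j \lambda_{i_j}$, $|v_I\rangle \coloneqq |v_{i_1}\rangle \cdots |v_{i_{r+1}}\rangle$, and similarly $|w_I\rangle$. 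The key observation is that $\Pi^{\mathcal{A}}_{r+1} |v_I\rangle = 0$ whenever $I$ has a repeated index (swapping the two equal slots picks up a sign $-1$ while leaving the vector fixed), and when all entries of $I$ are distinct, $\Pi^{\mathcal{A}}_{r+1}|v_I\rangle = \operatorname{sgn}(\tau_I) \, \Pi^{\mathcal{A}}_{r+1}|v_{S}\rangle$, where $S = \{i_1, \ldots, i_{r+1}\}$ is the underlying set of indices (written in increasing order) and $\tau_I$ is the permutation that sorts $I$.

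For the ``only if'' direction (Schmidt rank $\leq r$ implies annihilation), every multi-index $I \in [r']^{r+1}$ with $r' \leq r$ must have a repeated index by pigeonhole, so each term in the expansion is killed by $\Pi^{\mathcal{A}}_{R,r+1}$ (and likewise by $\Pi^{\mathcal{A}}_{S,r+1}$). Hence the entire sum vanishes.

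For the ``if'' direction I would prove the contrapositive: if $r' \geq r+1$, the projected state is nonzero. Grouping the surviving terms by the unordered set $S$ of distinct indices and using that the same permutation $\tau_I$ appears on both the $R$ and $S$ sides (so the sign factors combine to $\operatorname{sgn}(\tau_I)^2 = 1$), the projection reduces to
\begin{equation}
(r+1)! \sum_{\substack{S \subseteq \{0,\ldots,r'-1\} \\ |S| = r+1}} \Lambda_S \, \Pi^{\mathcal{A}}_{R,r+1}|v_S\rangle_R \otimes \Pi^{\mathcal{A}}_{S,r+1}|w_S\rangle_S,
\end{equation}
with $\Lambda_S \coloneqq \prod_{j \in S} \lambda_j > 0$. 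The main (mild) obstacle is justifying that this sum is nonzero: the antisymmetrized tensors $\Pi^{\mathcal{A}}_{r+1}|v_S\rangle$ indexed by distinct $(r+1)$-subsets $S$ are mutually orthogonal (since the $|v_i\rangle$ form an orthonormal basis and distinct subsets give supports on disjoint computational-basis strings after antisymmetrization), and likewise for the $|w_S\rangle$; combined with $\Lambda_S > 0$, no cancellation occurs and the sum is nonzero, completing the proof.
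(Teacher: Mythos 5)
Your argument is correct. Note, however, that the paper does not prove this statement at all: it is quoted verbatim as a cited result from~\cite{JLV22}, so there is no in-paper proof to compare against. Your Schmidt-basis expansion is the standard and natural route: the ``only if'' direction by pigeonhole on repeated indices is airtight, and the ``if'' direction correctly exploits that the same sorting permutation $\tau_I$ appears on the $R$ and $S$ sides so the signs square to one. The only point worth tightening is the final non-vanishing claim: mutual orthogonality of the vectors $\Pi^{\mathcal{A}}_{r+1}\ket{v_S}$ over distinct subsets $S$ rules out cancellation between terms, but you should also state explicitly that each individual $\Pi^{\mathcal{A}}_{r+1}\ket{v_S}$ is itself nonzero --- e.g.\ via $\bigl\Vert \Pi^{\mathcal{A}}_{r+1}\ket{v_S}\bigr\Vert_2^2 = \bra{v_S}\Pi^{\mathcal{A}}_{r+1}\ket{v_S} = \tfrac{1}{(r+1)!}$, since only the identity permutation contributes when the $r+1$ indices are distinct and the $\ket{v_i}$ are orthonormal (this is just the statement that a Slater determinant of linearly independent vectors is nonzero). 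With that one line added, the proof is complete.
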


Using the antisymmetry test from Section~\ref{sec:antisymmetry}, we can construct a test for estimating the Schmidt rank by performing two antisymmetry tests in tandem. This process is shown in the schematic in Figure~\ref{fig:SchmidtSchematic}, and we give a concrete construction for $r=2$ in Figure~\ref{fig:SchmidtSchematic2}. For these algorithms, a non-zero probability of measuring the all-zeros ($\ket{\vec{0}}_C$) outcome indicates that the Schmidt rank of the state is strictly greater than $r$, i.e., $\operatorname{SR}(\vert \psi \rangle) > r$. 
Note that in both figures, the registers $R$ and $S$ can be multi-qubit registers. In that case, each qubit of the multi-qubit registers is individually swapped. The count of the number of qubits, two-qubit gates, and CSWAPs to compare the Schmidt rank of a bipartite $n$-qubit state, and a constant $r$ is given in Table~\ref{tab:schmidt-rank-count}. 

\begin{figure}[t]
\begin{center}
\includegraphics[width=\columnwidth]{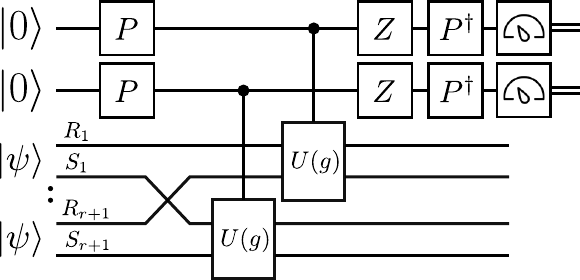}
\end{center}
\caption{Construction to test if the Schmidt rank of the input state is less than or equal to $r$. The circuit realizes two antisymmetric projections on the $R$ and $S$ subsystems independently. The unitary $P$ creates the appropriate superposition for the given $r$, and $U(g)$ is one of the corresponding group elements.}
\label{fig:SchmidtSchematic}
\end{figure}

\begin{figure}
\begin{center}
\includegraphics[width=\columnwidth]{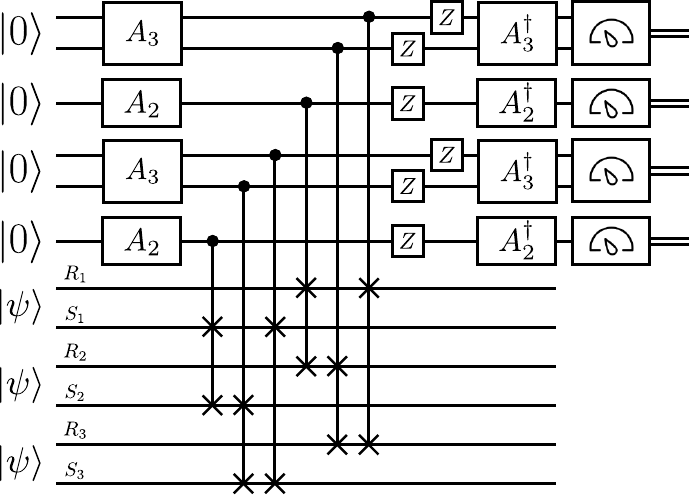}
\end{center}
\caption{Concrete example of Schmidt rank test for $r=2$. 
}
\label{fig:SchmidtSchematic2}
\end{figure}

\renewcommand{\arraystretch}{1.5}
\begin{table}
    \centering
    \begin{tabular}{P{3cm}|P{4cm}} \hline
    Property & Count \\ \hline\hline
    Qubits & $(r+1)(n + r)$\\
    Two-qubit gates & $4r(r-1)$ \\
    CSWAPs & $nr(r+1)$ \\
    \hline
    \end{tabular}
    \caption{Number of elements for the Schmidt rank estimation algorithm. Here $n$ denotes the total number of qubits of the state being tested, and $r$ denotes the Schmidt rank test value.}
    \label{tab:schmidt-rank-count}
\end{table}
\renewcommand{\arraystretch}{1.0}

We demonstrate this algorithm in Figures~\ref{fig:Schmidt_GHZ}, \ref{fig:Schmidt_W}, and \ref{fig:Schmidt_Bell_Bell} for a Bell state, W state, and double Bell state respectively. For each of these states, we plot the probability of obtaining the all-zeros outcome as a function of the Schmidt cutoff rank~$r$. These states are all known to be entangled, and that is reflected in the results given. Each of the plots have the probability $p(0) \neq 0$ for $r=1$, implying that their Schmidt rank $r >1$, the signature of an entangled state.  

\begin{figure}
    \centering
    \includegraphics[width=\columnwidth]{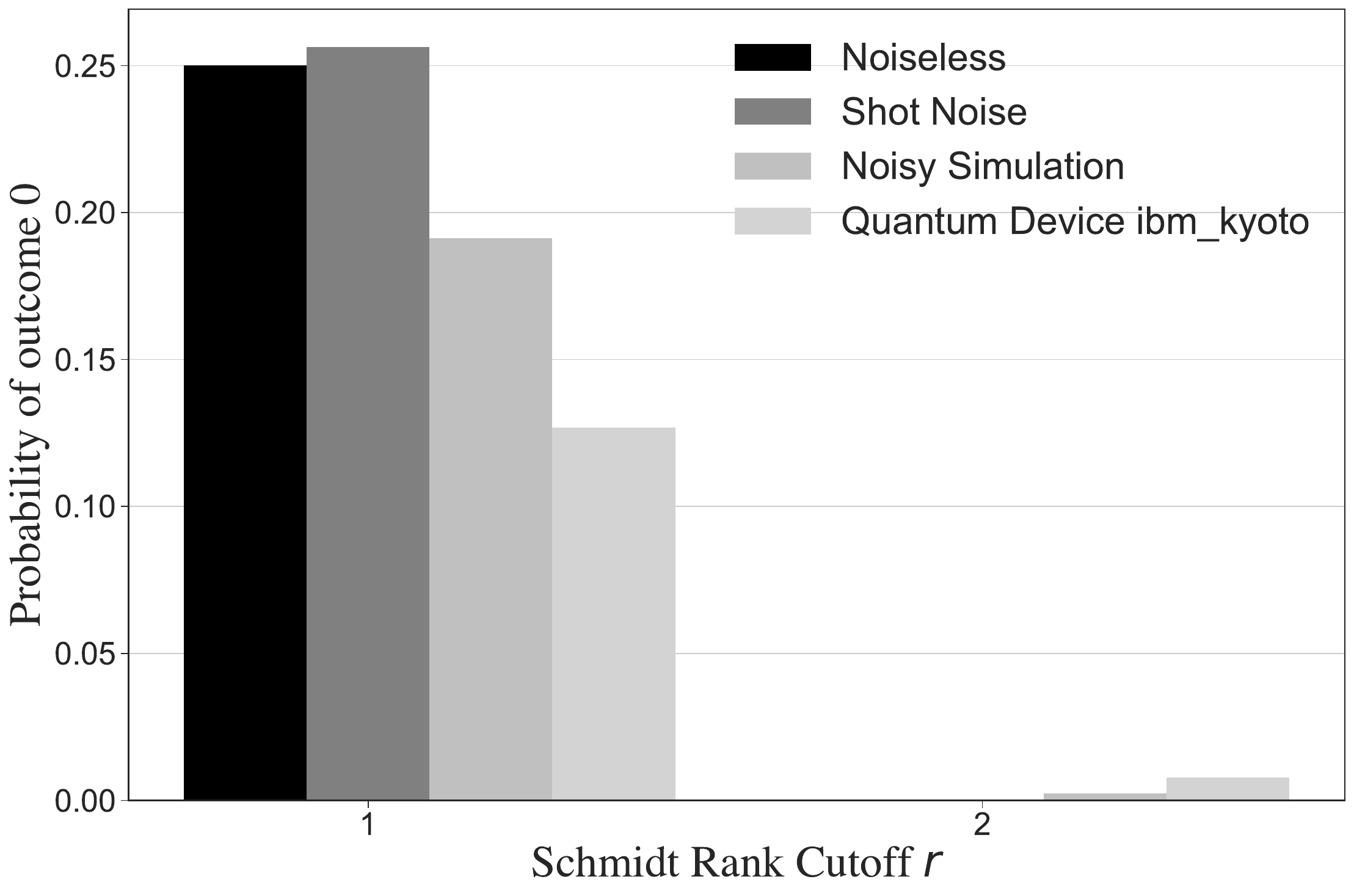}
    \caption{
    Measurement statistics for the Bell state $\ket{\Phi^+}_{RS}$. We see that the probability of obtaining outcome $0$ is non-zero for $r=1$ and is zero for $r=2$. This implies that the Schmidt rank of the state across the $R:S$ bipartition is two.}
    \label{fig:Schmidt_GHZ}
\end{figure}

\begin{figure}
    \centering
    \includegraphics[width=\columnwidth]{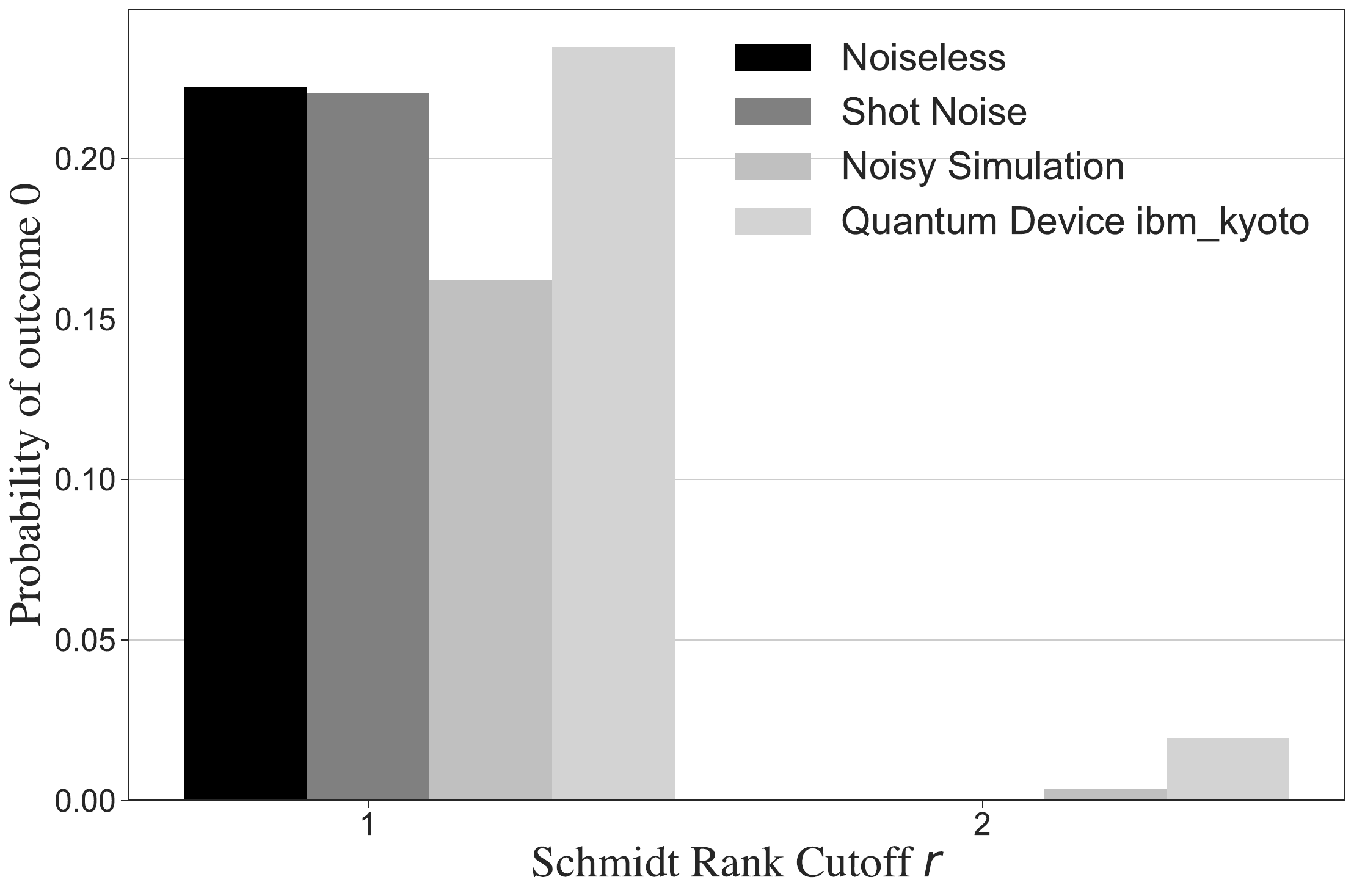}
    \caption{Measurement statistics for the W state $\ket{W}_{RS}$, where $\vert R \vert = 2$. 
    We see that the probability of obtaining outcome $0$ is non-zero for $r=1$, and is zero for $r=2$. This implies that the Schmidt rank of the state across the $R:S$ bipartition is two.}
    \label{fig:Schmidt_W}
\end{figure}

\begin{figure}
    \centering
    \includegraphics[width=\columnwidth]{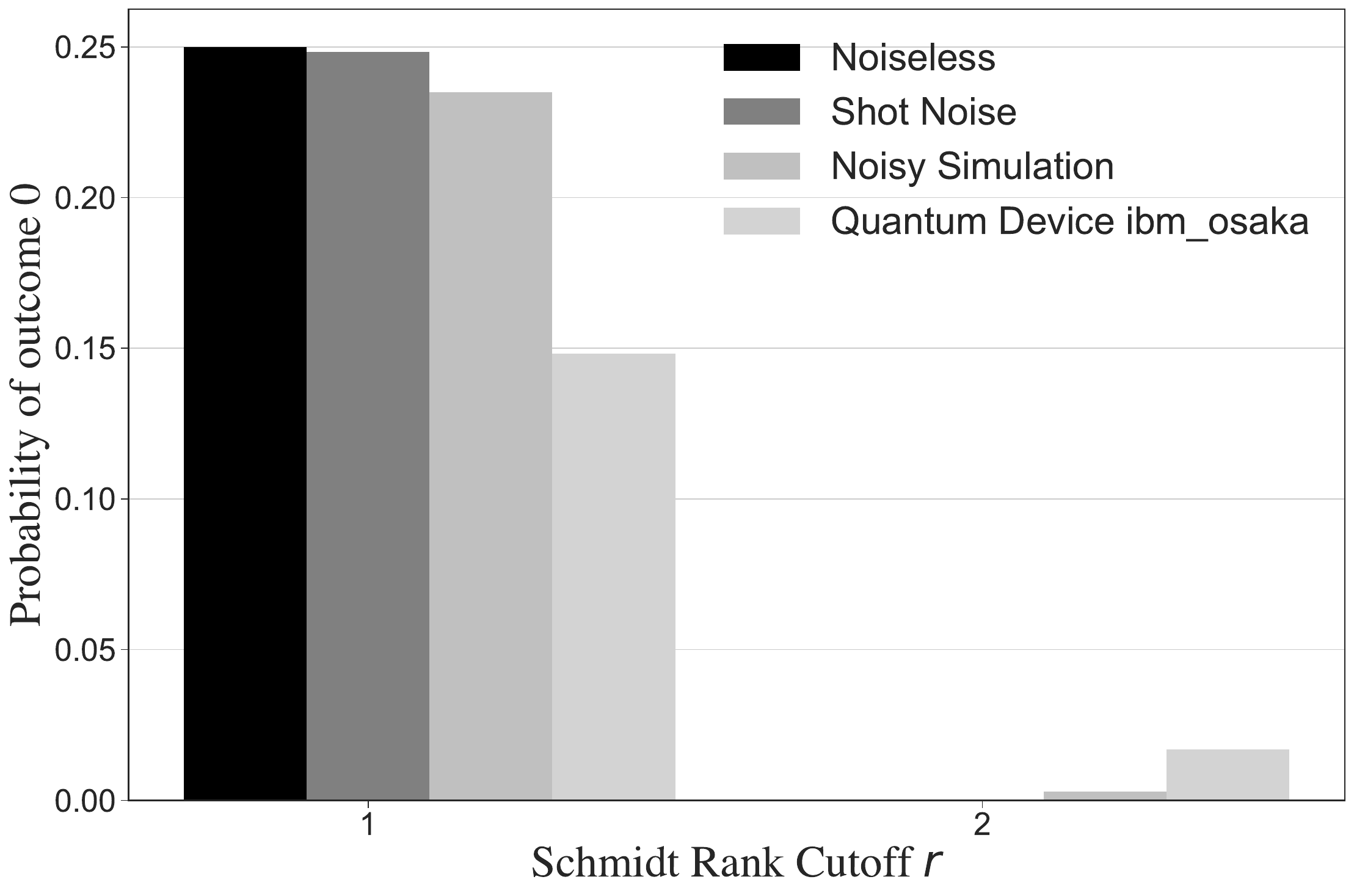}
    \caption{Measurement statistics for the double Bell state $\ket{\Phi^+}_{RS_1} \otimes \ket{ \Phi^+}_{S_2S_3}$, where $\vert R \vert = 1$. The bipartition here is between the first qubit and the last three.
    Since the second Bell state lies completely in the $S$ partition, it should have no effect on the algorithm, and the data indicates this fact. We see that the probability of obtaining outcome $0$ is non-zero for $r=1$ and is zero for $r=2$. This implies that the Schmidt rank of the state across the $R:S$ bipartition is two.}
    \label{fig:Schmidt_Bell_Bell}
\end{figure}

\subsection{Alternate Test for Werner States}\label{sec:werner}

A Werner state~\cite{Werner89} exhibits a particular kind of symmetry called Werner invariance. Consider a state $\rho$ on $N$ qudits, and then define $\overline{\rho}$ by
\begin{equation}
    \overline{\rho} = \int (U)^{\otimes N} \rho (U^\dag)^{\otimes N} dU \, .
\end{equation}
If a state $\rho$ exhibits symmetry under a collective unitary action, then $\rho=\overline{\rho}$ is called a Werner state. For instance, this is the case for a two-party state that satisfies the following:
\begin{equation}\label{eq:2qubitwerner}
    \overline{\rho} = \int (U \otimes U) \rho_{AB} (U \otimes U)^{\dag} dU \, .
\end{equation}

As a special case of Schur--Weyl duality, we know that $\rho_{AB} =\overline{\rho}$ in~\eqref{eq:2qubitwerner} if and only if
\begin{equation}\label{eq:werneralt}
    \rho_{AB} = p\frac{\Pi_S}{d(d+1)/2} + (1-p)\frac{\Pi_{\mathcal{A}}}{d(d-1)/2} \, ,
\end{equation}
where $p \in [0,1]$, $\Pi_S$ is the projection onto the symmetric subspace, and $\Pi_{\mathcal{A}}$ is the projection onto the antisymmetric subspace. Specifically, we define $\Pi_S$ and $\Pi_{\mathcal{A}}$ to be 
\begin{align}
    \Pi_S & \coloneqq \frac{\mathbb{I}_{AB} + F_{AB}}{2} \, , \\
    \Pi_{\mathcal{A}} & \coloneqq \frac{\mathbb{I}_{AB} - F_{AB}}{2}\, ,
\end{align}
where $F_{AB}$ is the unitary swap operator for systems $A$ and $B$.

Since $\Pi_S + \Pi_{\mathcal{A}} = \mathbb{I}_{AB}$, it follows that a general state can be decomposed as
\begin{align}
    \sigma_{AB} &= (\Pi_S + \Pi_{\mathcal{A}}) \sigma_{AB} (\Pi_S + \Pi_{\mathcal{A}}) \\
    &= \Pi_S \sigma_{AB} \Pi_S + \Pi_{\mathcal{A}}\sigma_{AB}\Pi_{\mathcal{A}} \notag \\
    & \qquad + \Pi_{\mathcal{A}}\sigma_{AB}\Pi_S + \Pi_S\sigma_{AB}\Pi_{\mathcal{A}}\, .
\end{align}
By comparison with~\eqref{eq:werneralt}, we see that a state $\sigma_{AB}$ is a Werner state if and only if both of the cross terms $\Pi_{\mathcal{A}}\sigma_{AB}\Pi_S$ and $\Pi_S\sigma_{AB}\Pi_{\mathcal{A}}$ are zero. Thus, it suffices to develop a quantum algorithm for estimating the norm of these matrix elements in order to perform Werner-state symmetry tests.

We can perform such a test by using the general approach developed in Example~\ref{ex:difference}. First, notice that $\operatorname{SWAP} =(\Pi_\mathcal{S}- \Pi_{\mathcal{A}})$. Then let $U=\operatorname{SWAP}$. By measuring the control state as described therein, we obtain an estimate of $\left \Vert  \operatorname{Re} (\Pi_{\mathcal{S}} \rho \Pi_{\mathcal{A}}) \right \Vert_2^2$. This is depicted in Figure~\ref{fig:altaltWernertest}.

A simple example of a Werner state  is the singlet state 
\begin{equation}\label{eq:singlet}
    \ket{\psi^-} = \frac{1}{\sqrt{2}}\left ( \ket{01}-\ket{10} \right ) \, .
\end{equation}
Indeed, every pure Werner state is equal to a linear combination of singlets~\cite[Theorem~1]{Lyons2008multiparty}.
For example, consider the ``pizza'' state for six qubits given in~\cite[Figure~1]{Lyons2023Werner}:
\begin{multline}\label{eq:pizza}
    \ket{P} \coloneqq \frac{1}{\sqrt{8}} \Big ( \ket{000111} - \ket{111000} - \ket{001110} \\ + \ket{110001} - \ket{010101} + \ket{101010} \\
    + \ket{011100} - \ket{100011} \Big ) \, .
\end{multline}
Another example of a two-qubit Werner state is achieved via a convex combination of the maximally-mixed state and a Bell state, 
\begin{equation}\label{eq:mixedWerner}
    W(p) = \frac{p}{6}
    \begin{pmatrix}
\frac{p}{3} & 0 & 0 & 0  \\
0 & \frac{3-2p}{6} & \frac{-3+4p}{6} & 0\\
0 & \frac{-3+4p}{6} & \frac{3-2p}{6} & 0\\
0 & 0 & 0 & \frac{p}{3}\\
\end{pmatrix}\, . 
\end{equation}

We include computational results from applying our algorithm to the states in~\eqref{eq:singlet},~\eqref{eq:pizza}, and~\eqref{eq:mixedWerner} in Table~\ref{tab:WernerStateData}. For comparison, we also include the results of using a random two-qubit state and the states $\ket{00}$ and $\ket{11}$, none of which demonstrate Werner invariance.

\begin{table*}[t]
\centering
\begin{tabular}{c|c|c|c|c|c}
    State $\rho_{AB}$ & True Value & Noiseless & Shot Noise & Noisy & Quantum Device \\
    \hline
    \hline
    $\ket{\psi^-}$ & $0.0$ & $0.0$ & $0.0$ & $-0.0003$ & $-0.007$\\
    $\ket{P}$ & $0.0$ & $0.0$ & $0.0$ & $-0.001$ & $0.0008$\\
    $W(0.3)$ & $0.0$ & $0.0$ & $-0.002$ & $-5.6 \times 10^{-5}$ & $-2.1 \times 10^{-5}$\\
    Random $2$-qubit & $0.195$ & $0.195$ & $0.196$ & $0.071$ & $0.004$ \\
    $\ket{00}$ & $0.0$ & $0.0$ & $0.0$ & $0.001$ & $-0.0002$ \\
    $\ket{11}$ & $0.0$ & $0.0$ & $0.0$ & $0.0002$ & $0.01$ \\
    \end{tabular}
    \caption{Results of test for Werner States, which estimates $\left\Vert \operatorname{Re}(\Pi_{\mathcal{S}} \rho \Pi_{\mathcal{A}}) \right\Vert^2_2$ for different states of interest. We provide the true value (calculated by classical means), noiseless, shot noise with $10^5$ shots, noisy simulations, and quantum device {\tt ibm\_osaka}.}
    \label{tab:WernerStateData}
\end{table*}

\begin{figure}
\begin{center}
\includegraphics[width=\columnwidth]{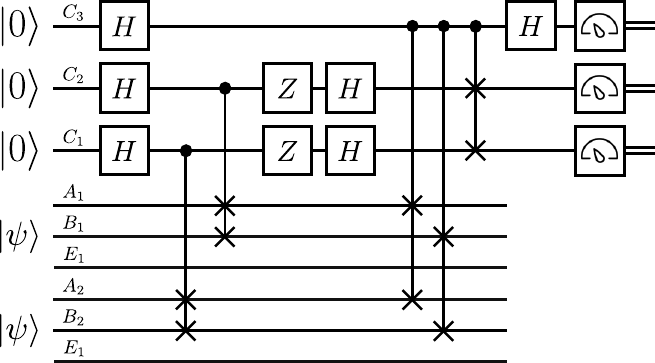}
\end{center}
\caption{Quantum circuit to test whether is a quantum state is a Werner state. 
}
\label{fig:altaltWernertest}
\end{figure}

\subsection{Estimating Commutators}

\label{sec:commutators}

Commutators play an important role in quantum mechanics, especially in quantifying uncertainty when measuring two observables. We present here two different methods for estimating a commutator and a nested commutator. We show how the antisymmetric circuit structure can be easily adapted to realize a quantum switch between operators, as described in~\cite{switch1,switch2,switch3}.

We begin by considering the commutator between two unitaries $A$ and $B$, which for now we suppose have the same dimensions. Figure~\ref{fig:commutator} depicts a quantum circuit to obtain this quantity. First, consider an input state $\ket{0}_C\ket{\psi}$. After acting on the control qubit with a Hadamard and then a Pauli-$Z$ gate, the state of the system is
\begin{equation}
    \ket{-}_C\ket{\psi} =\frac{1}{\sqrt{2}}(\ket{0}_C\ket{\psi} - \ket{1}_C\ket{\psi}) \, .
\end{equation}
Next, conditioned on the control qubit, act on the system with the operators $AB$ if in the state $\ket{0}_C$ or with operators $BA$ is the control is in the state $\ket{1}$. Performing this sequence of controls leads to the overall state
\begin{equation}
    \frac{1}{\sqrt{2}}(\ket{0}_C AB\ket{\psi} - \ket{1}_C BA\ket{\psi}) \, .
\end{equation}
After a second Hadamard on the control register, this becomes
\begin{align}
    &\frac{1}{2}(\ket{0}_C AB\ket{\psi} + \ket{1}_C AB\ket{\psi} - \ket{0}_C BA\ket{\psi}) + \ket{1}_C BA\ket{\psi})  \notag \\
    &= \frac{1}{2} [\ket{0}_C\otimes(AB - BA)\ket{\psi} + \ket{1}_C\otimes(AB +BA)\ket{\psi})] \\
    &=\frac{1}{2}[\ket{0}_C[A,B]\ket{\psi} + \ket{1}_C\{A,B\}\ket{\psi}] \, ,
\end{align}
where $\{A,B\}$ denotes the anticommutator of $A$ and $B$. Given this final state, we measure the control qubit to be in the state $\ket{0}_C$ with probability
\begin{equation}\label{eq:commutator}
    P(\ket{0}\!\bra{0}_C) =\frac{1}{4} \left\Vert [A,B]\ket{\psi} \right\Vert_2^2 \, .
\end{equation}
In this manner, we have effectively reproduced Equations~(5)--(8) of~\cite{switch3}.

\begin{figure}
\begin{center}
\includegraphics[width=\columnwidth]{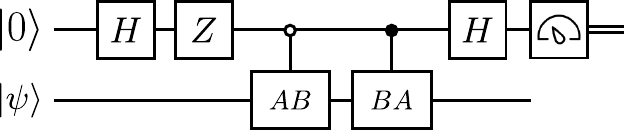}
\end{center}
\caption{Figure demonstrating how to measure the commutator of two unitary operators $A$ and $B$. The probability to measure $\ket{0}$ on the control state is $\frac{1}{4} \left\Vert [A,B]\ket{\psi}\right\Vert_2^2$}
\label{fig:commutator}
\end{figure}

However, in principle there could be higher order, or nested, commutators to consider. We can determine the sum of the nested commutators between two operators $A$ and $B$ via the Baker--Campbell--Hausdorff (BCH) formula. For some elements $X$ and $Y$ in the Lie algebra of a Lie group, BCH (specifically, an 1897 lemma of the famous theorem) tells us that
\begin{equation}\label{eq:BCH}
    e^{X}Ye^{-X} = e^{\operatorname{ad}_{X}}Y = Y + [X,Y] + \frac{1}{2!}[X,[X,Y]] + \cdots \, ,
\end{equation}
where $\operatorname{ad}_X$ is the adjoint endomorphism defined to act as $\operatorname{ad}_X Y = [X,Y]$.

\begin{figure}
\begin{center}
\includegraphics[width=\columnwidth]{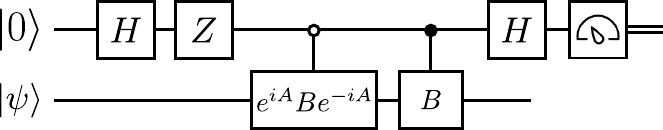}
\end{center}
\caption{Figure demonstrating how to measure the commutator of two operators $A$ and $B$. This circuit uses the BCH formula to calculate the full sum of nested commutations.}
\label{fig:BCHcommutator}
\end{figure}

Suppose now that we have an exponentiated unitary $U = e^{iA}$. In this context, $A$ is essentially a Hamiltonian, and the unitary $U$ can be realized via Hamiltonian simulation methods \cite{childs2018simulation}. For particular choices of the Hamiltonian $A$, the unitary $U$ can be efficiently realized on a quantum computer. 
While every unitary can be written in this form via Stone's theorem~\cite{hall2013quantum}, for practical purposes, it is best to choose $A$ such that $U$ can be implemented efficiently on a quantum computer. We can use~\eqref{eq:BCH} to calculate the sum of the nested commutators. Observe that
\begin{equation}
\label{eq:BCHminusB}
\begin{aligned}
    e^{iA}Be^{-iA} - B & = i[A,B] + \frac{-1}{2!}[A,[A,B]] + \cdots \\
    & = \sum_{n=1}^{\infty} (\operatorname{ad}_{iA})^n B.
\end{aligned}
\end{equation}
Realizing this quantity merely employs a rewriting of the previous algorithm used for the commutator. Replace the operator $AB$ with $e^{iA}Be^{-iA}$ and $BA$ with $B$, as shown in Figure~\ref{fig:BCHcommutator}. Then, follow the same procedure as below; only now, the final state of the system before the measurement is 
\begin{multline}
        \frac{1}{2}(\ket{0}_C(e^{iA}Be^{-iA} - B)\ket{\psi} + \ket{1}_C(e^{iA}Be^{-iA} + B)\ket{\psi}) \\
        =\frac{1}{2}\left(\sum_{n=1}^{\infty} \ket{0}_C(\operatorname{ad}_{iA})^n B\ket{\psi} + \ket{1}_C(e^{iA}Be^{-iA} + B)\ket{\psi}\right) \,.
\end{multline}
From this final state, we measure the control qubit to be in the state $\ket{0}_C$ with probability
\begin{equation}\label{eq:nestedcomm}
    P(\ket{0}\!\bra{0}_C) = \frac{1}{4}\left\Vert \sum_{n=1}^{\infty} (\operatorname{ad}_{iA})^n B \ket{\psi} \right\Vert_2^2 \, .
\end{equation}

Both the probabilities in~\eqref{eq:commutator} and~\eqref{eq:nestedcomm} above depend on the input state $\ket{\psi}$. By choosing to replace $\ket{\psi}$ with the maximally-mixed state, we can obtain an estimation of both the commutator and the sum of nested commutators respectively as
\begin{align}
    P =\frac{1}{4 d} \left\Vert [A,B] \right\Vert_2^2 \,
\end{align}
and 
\begin{equation}
    P = \frac{1}{4d}\left\Vert \sum_{n=1}^{\infty} (\operatorname{ad}_{iA})^n B  \right\Vert_2^2 \, ,
\end{equation}
where $d$ is the dimension of the input state. However, by instead maximizing all possible input states $\ket{\psi}$, we instead obtain
\begin{equation}
     P =\frac{1}{4} \left\Vert [A,B] \right\Vert_{\infty}^2 \,
\end{equation}
for the single commutator estimation and
\begin{equation}
    P = \frac{1}{4}\left\Vert \sum_{n=1}^{\infty} (\operatorname{ad}_{iA})^n B  \right\Vert_{\infty}^2 \, ,
\end{equation}
for the algorithm in which we consider the sum of the nested commutators of $iA$ and $B$. 

Thus, via this simple application of an asymmetric circuit we have demonstrated methods by which the commutator of two operators can be computed.

\subsection{Brief Comments on Schur Sampling}\label{sec:complexity}

A pointed observation of our procedure for concatenating projectors, particularly projectors onto irreps of the symmetric group as in Section~\ref{sec:ex1}, might naturally lead to the question of how this procedure compares to  Schur sampling. Schur sampling refers to the task of measuring the Schur basis labels of a $n$-qubit (or, in principle, qudit) system~\cite{Kirby2018schur,Krovi2019efficienthigh,cervero2023weak}. This task is often accomplished by first implementing the quantum Schur transform~\cite{bacon2005quantum,harrow2005applications}. The Schur basis labels are typically given in a triple $(\lambda, p_\lambda, q_\lambda)$, which denote the Young label and bases of the irreducible representations given by Schur--Weyl duality. Determining the values of each of these labels is called strong Schur sampling, while the task of only determining the Young label is called weak Schur sampling~\cite{cervero2023weak}. For more information on the quantum Schur transform and the related tasks of Schur sampling, see Refs.~\cite{bacon2005quantum,harrow2005applications}.

Much previous literature has been devoted to implementing the quantum Schur transform and measuring Young labels.
The seminal work of~\cite{bacon2005quantum} realized the quantum Schur transform in $O(\operatorname{poly}(n,d,\log (1/\varepsilon)))$ operations, with $d$ being the dimension of a qudit, $\varepsilon$ the allowable error, and $n$ is index of the symmetric group $S_n$. Similarly, more recent scholarship affirms that high-dimensional quantum Schur transforms can be implemented in $O(\operatorname{poly}(n,\log (1/\varepsilon))$ operations, and an explicit gate count of $O(n^4\log(n/\varepsilon))$ was given in~\cite{Kirby2018schur} for the inverse Schur transform. These approaches all use $O(n)$ qubits. A technique for weak Schur sampling can estimate Young labels using logarithmically many qubits and $O(n^3 \log(n/\varepsilon))$ gates~\cite{cervero2023weak}. So how would an approach based on our concatenated circuits fair in comparison?

The natural procedure for Schur sampling based on the concatenated projector procedure given in Section~\ref{sec:concat} would be to implement a projector onto each of the $r$ irreps of $S_n$ in turn and measure the projection of a state on that system. The number of qubits necessary for this task would require at least $n$ from the data register and an additional $n^2 + r$ ancillary qubits, based on the circuit construction for the symmetric group given in~\cite{laborde2021testing}, and $r-1$ qubits to store each projection measurement outcome. The reliance on ancillary qubits could in principle be reduced, as the optimal implementation based on group size would be $\log_d (|S_n|) = \log_d (n!)$, where $d$ is again the dimension of a qudit. For sufficiently large $n$, the Stirling approximation gives that this scales asymptotically as $O(n \log(n))$ with the size of the group. As such, the number of qubits required goes at best as $O(n \log(n) + n + r)$. 
For the construction used in this work, each transposition is realized by either a controlled-SWAP gate or such a gate with an additional Pauli-$Z$ acting on the ancilla. Each of these gates requires five to seven elementary gates from the Clifford+$T$ gate set to be realized~\cite{cervero2023weak} if not native. The symmetric test for $S_n$ uses $O(n^2)$ such gates. Assuming each irrep projector scales similarly, a procedure for testing each of these projections in sequence for $r$ projectors would scale as $O(5rn^2)$ elementary gates. The number of irreps of $S^n$ corresponds directly to the partitions of $n$, which itself grows exponentially with respect to $\sqrt{n}$~\cite{andrews2004integer,fulton2013representation}. As such, alternate approaches would likely be preferable to reduce necessary resources. Indeed, the authors of~\cite{bacon2005quantum} predicted just such an inefficiency, and instead achieve better results by employing Clebsch--Gordon transforms recursively.

However, our procedure does offer a benefit in an important regard. If only the projections onto the symmetric and antisymmetric spaces are desired, all other projections can be grouped together into a collective result. In this case, the algorithm described in Example~\ref{sec:ex1} will suffice. That procedure would scale primarily with the size of the symmetric and antisymmetric tests, thus requiring around $O(2 n^2)$ operations to implement.

\section{Discussion}\label{sec:discussion}

In this paper, we have provided a variety of procedures for realizing projectors in a quantum circuit, with a view towards combining symmetry, asymmetry, and antisymmetry. We have introduced an algorithm to realize the antisymmetric projector and demonstrated how an antisymmetrized circuit can be used to test for relevant physical phenomenon such as Schmidt rank, Werner invariance, or to realize a quantum switch. For the Schmidt-rank and Werner-state tests, we have given experimental results for various examples with small-dimensional systems, using currently available quantum hardware.

Some open questions remain. First and foremost, it remains to be seen whether this general procedure for concatenating projectors can be used to realize a Schur transform or a Schur sampling algorithm in the manner of~\cite{bacon2005quantum,harrow2005applications,cervero2023weak}. The naïve approach presented here would scale poorly with the size of the symmetric group. Previous literature utilized a recursive structure to simplify the implementation of this task to scale reasonably on a quantum computer. A natural next step would be to combine the recursive structure used in~\cite{bacon2005quantum} with the concatenation procedure given here to see if a less resource-intensive algorithm would result.

Another open question is whether these algorithms can be utilized to efficiently prepare the antisymmetric state. The natural approach of utilizing the maximally-mixed state as input to an antisymmetry test and then post-selecting for successful states would result in an exponentially low acceptance probability with respect to the number of qubits in said state. If such an algorithm cannot be devised to run efficiently, the natural alternative question is whether there exists an efficiently preparable state such that the acceptance probability is appreciably large.

Both of these questions rely on reasonable scaling of resources in terms of gates and qubits and are each worthy of further study.

\begin{acknowledgments}

MLL acknowledges support from the OUSD(R\&E) SMART SEED grant and NSF Grant No.~2315398.
SR and MMW acknowledge support
from the School of Electrical and Computer Engineering
at Cornell University, the National Science Foundation under
Grant No.~2315398, and AFRL under agreement no.~FA8750-23-2-0031.

This material is based on research sponsored by Air Force
Research Laboratory under agreement number FA8750-23-
2-0031. The U.S. Government is authorized to reproduce
and distribute reprints for Governmental purposes notwithstanding any copyright notation thereon. The views and
conclusions contained herein are those of the authors and
should not be interpreted as necessarily representing the official policies or endorsements, either expressed or implied,
of Air Force Research Laboratory or the U.S. Government.

This document is Distribution Statement A, Approved for Public Release; distribution is unlimited.

We acknowledge use of the IBM~Q for this work. The views expressed are those of the authors and do not reflect the official policy or position of IBM or the IBM~Q team.
\end{acknowledgments}

\bibliography{main}

\providecommand{\noopsort}[1]{}\providecommand{\singleletter}[1]{#1}%
\begin{thebibliography}{49}%
\makeatletter
\providecommand \@ifxundefined [1]{%
 \@ifx{#1\undefined}
}%
\providecommand \@ifnum [1]{%
 \ifnum #1\expandafter \@firstoftwo
 \else \expandafter \@secondoftwo
 \fi
}%
\providecommand \@ifx [1]{%
 \ifx #1\expandafter \@firstoftwo
 \else \expandafter \@secondoftwo
 \fi
}%
\providecommand \natexlab [1]{#1}%
\providecommand \enquote  [1]{``#1''}%
\providecommand \bibnamefont  [1]{#1}%
\providecommand \bibfnamefont [1]{#1}%
\providecommand \citenamefont [1]{#1}%
\providecommand \href@noop [0]{\@secondoftwo}%
\providecommand \href [0]{\begingroup \@sanitize@url \@href}%
\providecommand \@href[1]{\@@startlink{#1}\@@href}%
\providecommand \@@href[1]{\endgroup#1\@@endlink}%
\providecommand \@sanitize@url [0]{\catcode `\\12\catcode `\$12\catcode
  `\&12\catcode `\#12\catcode `\^12\catcode `\_12\catcode `\%12\relax}%
\providecommand \@@startlink[1]{}%
\providecommand \@@endlink[0]{}%
\providecommand \url  [0]{\begingroup\@sanitize@url \@url }%
\providecommand \@url [1]{\endgroup\@href {#1}{\urlprefix }}%
\providecommand \urlprefix  [0]{URL }%
\providecommand \Eprint [0]{\href }%
\providecommand \doibase [0]{https://doi.org/}%
\providecommand \selectlanguage [0]{\@gobble}%
\providecommand \bibinfo  [0]{\@secondoftwo}%
\providecommand \bibfield  [0]{\@secondoftwo}%
\providecommand \translation [1]{[#1]}%
\providecommand \BibitemOpen [0]{}%
\providecommand \bibitemStop [0]{}%
\providecommand \bibitemNoStop [0]{.\EOS\space}%
\providecommand \EOS [0]{\spacefactor3000\relax}%
\providecommand \BibitemShut  [1]{\csname bibitem#1\endcsname}%
\let\auto@bib@innerbib\@empty
\bibitem [{\citenamefont {Noether}(1918)}]{Noether1918}%
  \BibitemOpen
  \bibfield  {author} {\bibinfo {author} {\bibfnamefont {E.}~\bibnamefont
  {Noether}},\ }\bibfield  {title} {\bibinfo {title} {Invariante
  variationsprobleme},\ }\href {http://eudml.org/doc/59024} {\bibfield
  {journal} {\bibinfo  {journal} {Nachrichten von der Gesellschaft der
  Wissenschaften zu Göttingen, Mathematisch-Physikalische Klasse}\ }\textbf
  {\bibinfo {volume} {1918}},\ \bibinfo {pages} {235} (\bibinfo {year}
  {1918})}\BibitemShut {NoStop}%
\bibitem [{\citenamefont {Fedorov}(1891)}]{fedorov1891symmetry}%
  \BibitemOpen
  \bibfield  {author} {\bibinfo {author} {\bibfnamefont {E.~S.}\ \bibnamefont
  {Fedorov}},\ }\bibfield  {title} {\bibinfo {title} {The symmetry of regular
  systems of figures},\ }\href@noop {} {\bibfield  {journal} {\bibinfo
  {journal} {Proceedings of the Imperial St. Petersburg Mineralogical Society}\
  }\textbf {\bibinfo {volume} {28}},\ \bibinfo {pages} {1} (\bibinfo {year}
  {1891})}\BibitemShut {NoStop}%
\bibitem [{\citenamefont {Marvian}\ and\ \citenamefont
  {Spekkens}(2014{\natexlab{a}})}]{marvian2014extending}%
  \BibitemOpen
  \bibfield  {author} {\bibinfo {author} {\bibfnamefont {I.}~\bibnamefont
  {Marvian}}\ and\ \bibinfo {author} {\bibfnamefont {R.~W.}\ \bibnamefont
  {Spekkens}},\ }\bibfield  {title} {\bibinfo {title} {Extending {N}oether’s
  theorem by quantifying the asymmetry of quantum states},\ }\href
  {https://doi.org/10.1038/ncomms4821} {\bibfield  {journal} {\bibinfo
  {journal} {Nature Communications}\ }\textbf {\bibinfo {volume} {5}},\
  \bibinfo {pages} {3821} (\bibinfo {year} {2014}{\natexlab{a}})}\BibitemShut
  {NoStop}%
\bibitem [{\citenamefont {Marvian}\ and\ \citenamefont
  {Spekkens}(2013)}]{Marvian2013asymmetry}%
  \BibitemOpen
  \bibfield  {author} {\bibinfo {author} {\bibfnamefont {I.}~\bibnamefont
  {Marvian}}\ and\ \bibinfo {author} {\bibfnamefont {R.~W.}\ \bibnamefont
  {Spekkens}},\ }\bibfield  {title} {\bibinfo {title} {The theory of
  manipulations of pure state asymmetry: {I. B}asic tools, equivalence classes
  and single copy transformations},\ }\href
  {https://doi.org/10.1088/1367-2630/15/3/033001} {\bibfield  {journal}
  {\bibinfo  {journal} {New Journal of Physics}\ }\textbf {\bibinfo {volume}
  {15}},\ \bibinfo {pages} {033001} (\bibinfo {year} {2013})}\BibitemShut
  {NoStop}%
\bibitem [{\citenamefont {Caves}\ \emph {et~al.}(2002)\citenamefont {Caves},
  \citenamefont {Fuchs},\ and\ \citenamefont {Schack}}]{caves2002unknown}%
  \BibitemOpen
  \bibfield  {author} {\bibinfo {author} {\bibfnamefont {C.~M.}\ \bibnamefont
  {Caves}}, \bibinfo {author} {\bibfnamefont {C.~A.}\ \bibnamefont {Fuchs}},\
  and\ \bibinfo {author} {\bibfnamefont {R.}~\bibnamefont {Schack}},\
  }\bibfield  {title} {\bibinfo {title} {Unknown quantum states: the quantum de
  {F}inetti representation},\ }\href {https://doi.org/10.1063/1.1494475}
  {\bibfield  {journal} {\bibinfo  {journal} {Journal of Mathematical Physics}\
  }\textbf {\bibinfo {volume} {43}},\ \bibinfo {pages} {4537} (\bibinfo {year}
  {2002})}\BibitemShut {NoStop}%
\bibitem [{\citenamefont {K{\"o}nig}\ and\ \citenamefont
  {Renner}(2005)}]{konig2005finetti}%
  \BibitemOpen
  \bibfield  {author} {\bibinfo {author} {\bibfnamefont {R.}~\bibnamefont
  {K{\"o}nig}}\ and\ \bibinfo {author} {\bibfnamefont {R.}~\bibnamefont
  {Renner}},\ }\bibfield  {title} {\bibinfo {title} {A de {F}inetti
  representation for finite symmetric quantum states},\ }\href
  {https://doi.org/10.1063/1.2146188} {\bibfield  {journal} {\bibinfo
  {journal} {Journal of Mathematical Physics}\ }\textbf {\bibinfo {volume}
  {46}},\ \bibinfo {pages} {122108} (\bibinfo {year} {2005})}\BibitemShut
  {NoStop}%
\bibitem [{\citenamefont {Christandl}\ \emph {et~al.}(2007)\citenamefont
  {Christandl}, \citenamefont {K{\"o}nig}, \citenamefont {Mitchison},\ and\
  \citenamefont {Renner}}]{christandl2007one}%
  \BibitemOpen
  \bibfield  {author} {\bibinfo {author} {\bibfnamefont {M.}~\bibnamefont
  {Christandl}}, \bibinfo {author} {\bibfnamefont {R.}~\bibnamefont
  {K{\"o}nig}}, \bibinfo {author} {\bibfnamefont {G.}~\bibnamefont
  {Mitchison}},\ and\ \bibinfo {author} {\bibfnamefont {R.}~\bibnamefont
  {Renner}},\ }\bibfield  {title} {\bibinfo {title} {One-and-a-half quantum de
  {F}inetti theorems},\ }\href {https://doi.org/10.1007/s00220-007-0189-3}
  {\bibfield  {journal} {\bibinfo  {journal} {Communications in Mathematical
  Physics}\ }\textbf {\bibinfo {volume} {273}},\ \bibinfo {pages} {473}
  (\bibinfo {year} {2007})}\BibitemShut {NoStop}%
\bibitem [{\citenamefont {Werner}(1989{\natexlab{a}})}]{W89}%
  \BibitemOpen
  \bibfield  {author} {\bibinfo {author} {\bibfnamefont {R.~F.}\ \bibnamefont
  {Werner}},\ }\bibfield  {title} {\bibinfo {title} {An application of {B}ell's
  inequalities to a quantum state extension problem},\ }\href
  {https://doi.org/10.1007/BF00399761} {\bibfield  {journal} {\bibinfo
  {journal} {Letters in Mathematical Physics}\ }\textbf {\bibinfo {volume}
  {17}},\ \bibinfo {pages} {359} (\bibinfo {year}
  {1989}{\natexlab{a}})}\BibitemShut {NoStop}%
\bibitem [{\citenamefont {Doherty}\ \emph {et~al.}(2002)\citenamefont
  {Doherty}, \citenamefont {Parrilo},\ and\ \citenamefont
  {Spedalieri}}]{DPS02}%
  \BibitemOpen
  \bibfield  {author} {\bibinfo {author} {\bibfnamefont {A.~C.}\ \bibnamefont
  {Doherty}}, \bibinfo {author} {\bibfnamefont {P.~A.}\ \bibnamefont
  {Parrilo}},\ and\ \bibinfo {author} {\bibfnamefont {F.~M.}\ \bibnamefont
  {Spedalieri}},\ }\bibfield  {title} {\bibinfo {title} {Distinguishing
  separable and entangled states},\ }\href
  {https://doi.org/10.1103/PhysRevLett.88.187904} {\bibfield  {journal}
  {\bibinfo  {journal} {Physical Review Letters}\ }\textbf {\bibinfo {volume}
  {88}},\ \bibinfo {pages} {187904} (\bibinfo {year} {2002})}\BibitemShut
  {NoStop}%
\bibitem [{\citenamefont {Doherty}\ \emph {et~al.}(2004)\citenamefont
  {Doherty}, \citenamefont {Parrilo},\ and\ \citenamefont
  {Spedalieri}}]{DPS04}%
  \BibitemOpen
  \bibfield  {author} {\bibinfo {author} {\bibfnamefont {A.~C.}\ \bibnamefont
  {Doherty}}, \bibinfo {author} {\bibfnamefont {P.~A.}\ \bibnamefont
  {Parrilo}},\ and\ \bibinfo {author} {\bibfnamefont {F.~M.}\ \bibnamefont
  {Spedalieri}},\ }\bibfield  {title} {\bibinfo {title} {Complete family of
  separability criteria},\ }\href {https://doi.org/10.1103/PhysRevA.69.022308}
  {\bibfield  {journal} {\bibinfo  {journal} {Physical Review A}\ }\textbf
  {\bibinfo {volume} {69}},\ \bibinfo {pages} {022308} (\bibinfo {year}
  {2004})}\BibitemShut {NoStop}%
\bibitem [{\citenamefont {Anschuetz}\ \emph {et~al.}(2023)\citenamefont
  {Anschuetz}, \citenamefont {Bauer}, \citenamefont {Kiani},\ and\
  \citenamefont {Lloyd}}]{anschuetz2023efficient}%
  \BibitemOpen
  \bibfield  {author} {\bibinfo {author} {\bibfnamefont {E.~R.}\ \bibnamefont
  {Anschuetz}}, \bibinfo {author} {\bibfnamefont {A.}~\bibnamefont {Bauer}},
  \bibinfo {author} {\bibfnamefont {B.~T.}\ \bibnamefont {Kiani}},\ and\
  \bibinfo {author} {\bibfnamefont {S.}~\bibnamefont {Lloyd}},\ }\bibfield
  {title} {\bibinfo {title} {Efficient classical algorithms for simulating
  symmetric quantum systems},\ }\href
  {https://doi.org/10.22331/q-2023-11-28-1189} {\bibfield  {journal} {\bibinfo
  {journal} {Quantum}\ }\textbf {\bibinfo {volume} {7}},\ \bibinfo {pages}
  {1189} (\bibinfo {year} {2023})}\BibitemShut {NoStop}%
\bibitem [{\citenamefont {LaBorde}\ \emph {et~al.}(2023)\citenamefont
  {LaBorde}, \citenamefont {Rethinasamy},\ and\ \citenamefont
  {Wilde}}]{laborde2021testing}%
  \BibitemOpen
  \bibfield  {author} {\bibinfo {author} {\bibfnamefont {M.~L.}\ \bibnamefont
  {LaBorde}}, \bibinfo {author} {\bibfnamefont {S.}~\bibnamefont
  {Rethinasamy}},\ and\ \bibinfo {author} {\bibfnamefont {M.~M.}\ \bibnamefont
  {Wilde}},\ }\bibfield  {title} {\bibinfo {title} {Testing symmetry on quantum
  computers},\ }\href {https://doi.org/10.22331/q-2023-09-25-1120} {\bibfield
  {journal} {\bibinfo  {journal} {{Quantum}}\ }\textbf {\bibinfo {volume}
  {7}},\ \bibinfo {pages} {1120} (\bibinfo {year} {2023})}\BibitemShut
  {NoStop}%
\bibitem [{\citenamefont {LaBorde}\ and\ \citenamefont
  {Wilde}(2022)}]{laborde2022quantum}%
  \BibitemOpen
  \bibfield  {author} {\bibinfo {author} {\bibfnamefont {M.~L.}\ \bibnamefont
  {LaBorde}}\ and\ \bibinfo {author} {\bibfnamefont {M.~M.}\ \bibnamefont
  {Wilde}},\ }\bibfield  {title} {\bibinfo {title} {Quantum algorithms for
  testing {H}amiltonian symmetry},\ }\href
  {https://doi.org/10.1103/PhysRevLett.129.160503} {\bibfield  {journal}
  {\bibinfo  {journal} {Physical Review Letters}\ }\textbf {\bibinfo {volume}
  {129}},\ \bibinfo {pages} {160503} (\bibinfo {year} {2022})}\BibitemShut
  {NoStop}%
\bibitem [{\citenamefont {Bandyopadhyay}\ \emph {et~al.}(2023)\citenamefont
  {Bandyopadhyay}, \citenamefont {Rubin}, \citenamefont {Radulaski},\ and\
  \citenamefont {Wilde}}]{bandyopadhyay2023efficient}%
  \BibitemOpen
  \bibfield  {author} {\bibinfo {author} {\bibfnamefont {R.}~\bibnamefont
  {Bandyopadhyay}}, \bibinfo {author} {\bibfnamefont {A.~H.}\ \bibnamefont
  {Rubin}}, \bibinfo {author} {\bibfnamefont {M.}~\bibnamefont {Radulaski}},\
  and\ \bibinfo {author} {\bibfnamefont {M.~M.}\ \bibnamefont {Wilde}},\
  }\bibfield  {title} {\bibinfo {title} {Efficient quantum algorithms for
  testing symmetries of open quantum systems},\ }\href
  {https://doi.org/10.1142/S1230161223500178} {\bibfield  {journal} {\bibinfo
  {journal} {Open Systems \& Information Dynamics}\ }\textbf {\bibinfo {volume}
  {30}},\ \bibinfo {pages} {2350017} (\bibinfo {year} {2023})}\BibitemShut
  {NoStop}%
\bibitem [{\citenamefont {Rethinasamy}\ \emph {et~al.}(2023)\citenamefont
  {Rethinasamy}, \citenamefont {LaBorde},\ and\ \citenamefont
  {Wilde}}]{rethinasamy2023quantum}%
  \BibitemOpen
  \bibfield  {author} {\bibinfo {author} {\bibfnamefont {S.}~\bibnamefont
  {Rethinasamy}}, \bibinfo {author} {\bibfnamefont {M.~L.}\ \bibnamefont
  {LaBorde}},\ and\ \bibinfo {author} {\bibfnamefont {M.~M.}\ \bibnamefont
  {Wilde}},\ }\href {https://doi.org/10.48550/arXiv.2309.10081} {\bibinfo
  {title} {Quantum computational complexity and symmetry}} (\bibinfo {year}
  {2023}),\ \Eprint {https://arxiv.org/abs/2309.10081} {arXiv:2309.10081}
  \BibitemShut {NoStop}%
\bibitem [{\citenamefont {Ragone}\ \emph {et~al.}(2022)\citenamefont {Ragone},
  \citenamefont {Braccia}, \citenamefont {Nguyen}, \citenamefont {Schatzki},
  \citenamefont {Coles}, \citenamefont {Sauvage}, \citenamefont {Larocca},\
  and\ \citenamefont {Cerezo}}]{ragone2022representation}%
  \BibitemOpen
  \bibfield  {author} {\bibinfo {author} {\bibfnamefont {M.}~\bibnamefont
  {Ragone}}, \bibinfo {author} {\bibfnamefont {P.}~\bibnamefont {Braccia}},
  \bibinfo {author} {\bibfnamefont {Q.~T.}\ \bibnamefont {Nguyen}}, \bibinfo
  {author} {\bibfnamefont {L.}~\bibnamefont {Schatzki}}, \bibinfo {author}
  {\bibfnamefont {P.~J.}\ \bibnamefont {Coles}}, \bibinfo {author}
  {\bibfnamefont {F.}~\bibnamefont {Sauvage}}, \bibinfo {author} {\bibfnamefont
  {M.}~\bibnamefont {Larocca}},\ and\ \bibinfo {author} {\bibfnamefont
  {M.}~\bibnamefont {Cerezo}},\ }\href
  {https://doi.org/10.48550/arXiv.2210.07980} {\bibinfo {title} {Representation
  theory for geometric quantum machine learning}} (\bibinfo {year} {2022}),\
  \bibinfo {note} {arXiv:2210.07980}\BibitemShut {NoStop}%
\bibitem [{\citenamefont {Larocca}\ \emph {et~al.}(2022)\citenamefont
  {Larocca}, \citenamefont {Sauvage}, \citenamefont {Sbahi}, \citenamefont
  {Verdon}, \citenamefont {Coles},\ and\ \citenamefont
  {Cerezo}}]{larocca2022group}%
  \BibitemOpen
  \bibfield  {author} {\bibinfo {author} {\bibfnamefont {M.}~\bibnamefont
  {Larocca}}, \bibinfo {author} {\bibfnamefont {F.}~\bibnamefont {Sauvage}},
  \bibinfo {author} {\bibfnamefont {F.~M.}\ \bibnamefont {Sbahi}}, \bibinfo
  {author} {\bibfnamefont {G.}~\bibnamefont {Verdon}}, \bibinfo {author}
  {\bibfnamefont {P.~J.}\ \bibnamefont {Coles}},\ and\ \bibinfo {author}
  {\bibfnamefont {M.}~\bibnamefont {Cerezo}},\ }\bibfield  {title} {\bibinfo
  {title} {Group-invariant quantum machine learning},\ }\href
  {https://doi.org/10.1103/PRXQuantum.3.030341} {\bibfield  {journal} {\bibinfo
   {journal} {PRX Quantum}\ }\textbf {\bibinfo {volume} {3}},\ \bibinfo {pages}
  {030341} (\bibinfo {year} {2022})}\BibitemShut {NoStop}%
\bibitem [{\citenamefont {Meyer}\ \emph {et~al.}(2023)\citenamefont {Meyer},
  \citenamefont {Mularski}, \citenamefont {Gil-Fuster}, \citenamefont {Mele},
  \citenamefont {Arzani}, \citenamefont {Wilms},\ and\ \citenamefont
  {Eisert}}]{meyer2023exploit}%
  \BibitemOpen
  \bibfield  {author} {\bibinfo {author} {\bibfnamefont {J.~J.}\ \bibnamefont
  {Meyer}}, \bibinfo {author} {\bibfnamefont {M.}~\bibnamefont {Mularski}},
  \bibinfo {author} {\bibfnamefont {E.}~\bibnamefont {Gil-Fuster}}, \bibinfo
  {author} {\bibfnamefont {A.~A.}\ \bibnamefont {Mele}}, \bibinfo {author}
  {\bibfnamefont {F.}~\bibnamefont {Arzani}}, \bibinfo {author} {\bibfnamefont
  {A.}~\bibnamefont {Wilms}},\ and\ \bibinfo {author} {\bibfnamefont
  {J.}~\bibnamefont {Eisert}},\ }\bibfield  {title} {\bibinfo {title}
  {Exploiting symmetry in variational quantum machine learning},\ }\href
  {https://doi.org/10.1103/PRXQuantum.4.010328} {\bibfield  {journal} {\bibinfo
   {journal} {PRX Quantum}\ }\textbf {\bibinfo {volume} {4}},\ \bibinfo {pages}
  {010328} (\bibinfo {year} {2023})}\BibitemShut {NoStop}%
\bibitem [{\citenamefont {Padmanabhan}\ \emph {et~al.}(2020)\citenamefont
  {Padmanabhan}, \citenamefont {Munro}, \citenamefont {Dabo},\ and\
  \citenamefont {Gopalan}}]{Padmanabhan2020antisymmetry}%
  \BibitemOpen
  \bibfield  {author} {\bibinfo {author} {\bibfnamefont {H.}~\bibnamefont
  {Padmanabhan}}, \bibinfo {author} {\bibfnamefont {J.~M.}\ \bibnamefont
  {Munro}}, \bibinfo {author} {\bibfnamefont {I.}~\bibnamefont {Dabo}},\ and\
  \bibinfo {author} {\bibfnamefont {V.}~\bibnamefont {Gopalan}},\ }\bibfield
  {title} {\bibinfo {title} {Antisymmetry: Fundamentals and applications},\
  }\href {https://doi.org/10.1146/annurev-matsci-100219-101404} {\bibfield
  {journal} {\bibinfo  {journal} {Annual Review of Materials Research}\
  }\textbf {\bibinfo {volume} {50}},\ \bibinfo {pages} {255} (\bibinfo {year}
  {2020})}\BibitemShut {NoStop}%
\bibitem [{\citenamefont {Schilling}(2015)}]{schilling2015quantum}%
  \BibitemOpen
  \bibfield  {author} {\bibinfo {author} {\bibfnamefont {C.}~\bibnamefont
  {Schilling}},\ }\bibfield  {title} {\bibinfo {title} {The quantum marginal
  problem},\ }in\ \href {https://doi.org/10.1142/9789814618144_0010} {\emph
  {\bibinfo {booktitle} {Mathematical Results in Quantum Mechanics: Proceedings
  of the QMath12 Conference}}}\ (\bibinfo {organization} {World Scientific},\
  \bibinfo {year} {2015})\ pp.\ \bibinfo {pages} {165--176}\BibitemShut
  {NoStop}%
\bibitem [{\citenamefont {Johnston}\ \emph {et~al.}(2022)\citenamefont
  {Johnston}, \citenamefont {Lovitz},\ and\ \citenamefont
  {Vijayaraghavan}}]{JLV22}%
  \BibitemOpen
  \bibfield  {author} {\bibinfo {author} {\bibfnamefont {N.}~\bibnamefont
  {Johnston}}, \bibinfo {author} {\bibfnamefont {B.}~\bibnamefont {Lovitz}},\
  and\ \bibinfo {author} {\bibfnamefont {A.}~\bibnamefont {Vijayaraghavan}},\
  }\bibfield  {title} {\bibinfo {title} {Complete hierarchy of linear systems
  for certifying quantum entanglement of subspaces},\ }\href
  {https://doi.org/10.1103/PhysRevA.106.062443} {\bibfield  {journal} {\bibinfo
   {journal} {Physical Review A}\ }\textbf {\bibinfo {volume} {106}},\ \bibinfo
  {pages} {062443} (\bibinfo {year} {2022})}\BibitemShut {NoStop}%
\bibitem [{\citenamefont {Marvian}(2012)}]{marvian2012symmetry}%
  \BibitemOpen
  \bibfield  {author} {\bibinfo {author} {\bibfnamefont {I.}~\bibnamefont
  {Marvian}},\ }\emph {\bibinfo {title} {Symmetry, asymmetry and quantum
  information}},\ \href@noop {} {Ph.D. thesis},\ \bibinfo  {school} {University
  of Waterloo} (\bibinfo {year} {2012}),\ \bibinfo {note}
  {\url{http://hdl.handle.net/10012/7088}}\BibitemShut {NoStop}%
\bibitem [{\citenamefont {Marvian}\ and\ \citenamefont
  {Spekkens}(2014{\natexlab{b}})}]{MS14}%
  \BibitemOpen
  \bibfield  {author} {\bibinfo {author} {\bibfnamefont {I.}~\bibnamefont
  {Marvian}}\ and\ \bibinfo {author} {\bibfnamefont {R.~W.}\ \bibnamefont
  {Spekkens}},\ }\bibfield  {title} {\bibinfo {title} {Modes of asymmetry: The
  application of harmonic analysis to symmetric quantum dynamics and quantum
  reference frames},\ }\href {https://doi.org/10.1103/PhysRevA.90.062110}
  {\bibfield  {journal} {\bibinfo  {journal} {Physical Review A}\ }\textbf
  {\bibinfo {volume} {90}},\ \bibinfo {pages} {062110} (\bibinfo {year}
  {2014}{\natexlab{b}})},\ \bibinfo {note} {arXiv:1312.0680}\BibitemShut
  {NoStop}%
\bibitem [{\citenamefont {Harrow}(2013)}]{harrow2013church}%
  \BibitemOpen
  \bibfield  {author} {\bibinfo {author} {\bibfnamefont {A.~W.}\ \bibnamefont
  {Harrow}},\ }\href {https://doi.org/10.48550/arXiv.1308.6595} {\bibinfo
  {title} {The church of the symmetric subspace}} (\bibinfo {year} {2013}),\
  \Eprint {https://arxiv.org/abs/1308.6595} {arXiv:1308.6595} \BibitemShut
  {NoStop}%
\bibitem [{\citenamefont {Harrow}(2005)}]{harrow2005applications}%
  \BibitemOpen
  \bibfield  {author} {\bibinfo {author} {\bibfnamefont {A.~W.}\ \bibnamefont
  {Harrow}},\ }\emph {\bibinfo {title} {Applications of coherent classical
  communication and the {S}chur transform to quantum information theory}},\
  \href {https://doi.org/10.48550/arXiv.quant-ph/0512255} {Ph.D. thesis},\
  \bibinfo  {school} {Massachusetts Institute of Technology} (\bibinfo {year}
  {2005}),\ \bibinfo {note} {arXiv:quant-ph/0512255}\BibitemShut {NoStop}%
\bibitem [{\citenamefont {Barenco}\ \emph {et~al.}(1997)\citenamefont
  {Barenco}, \citenamefont {Berthiaume}, \citenamefont {Deutsch}, \citenamefont
  {Ekert}, \citenamefont {Jozsa},\ and\ \citenamefont
  {Macchiavello}}]{barenco1997stabilization}%
  \BibitemOpen
  \bibfield  {author} {\bibinfo {author} {\bibfnamefont {A.}~\bibnamefont
  {Barenco}}, \bibinfo {author} {\bibfnamefont {A.}~\bibnamefont {Berthiaume}},
  \bibinfo {author} {\bibfnamefont {D.}~\bibnamefont {Deutsch}}, \bibinfo
  {author} {\bibfnamefont {A.}~\bibnamefont {Ekert}}, \bibinfo {author}
  {\bibfnamefont {R.}~\bibnamefont {Jozsa}},\ and\ \bibinfo {author}
  {\bibfnamefont {C.}~\bibnamefont {Macchiavello}},\ }\bibfield  {title}
  {\bibinfo {title} {Stabilization of quantum computations by symmetrization},\
  }\href {https://doi.org/10.1137/S0097539796302452} {\bibfield  {journal}
  {\bibinfo  {journal} {SIAM Journal on Computing}\ }\textbf {\bibinfo {volume}
  {26}},\ \bibinfo {pages} {1541} (\bibinfo {year} {1997})}\BibitemShut
  {NoStop}%
\bibitem [{\citenamefont {Bradshaw}\ \emph {et~al.}(2023)\citenamefont
  {Bradshaw}, \citenamefont {LaBorde},\ and\ \citenamefont
  {Wilde}}]{bradshaw2022cycle}%
  \BibitemOpen
  \bibfield  {author} {\bibinfo {author} {\bibfnamefont {Z.~P.}\ \bibnamefont
  {Bradshaw}}, \bibinfo {author} {\bibfnamefont {M.~L.}\ \bibnamefont
  {LaBorde}},\ and\ \bibinfo {author} {\bibfnamefont {M.~M.}\ \bibnamefont
  {Wilde}},\ }\bibfield  {title} {\bibinfo {title} {Cycle index polynomials and
  generalized quantum separability tests},\ }\href
  {https://doi.org/10.1098/rspa.2022.0733} {\bibfield  {journal} {\bibinfo
  {journal} {Proceedings of the Royal Society A}\ }\textbf {\bibinfo {volume}
  {479}},\ \bibinfo {pages} {20220733} (\bibinfo {year} {2023})},\ \bibinfo
  {note} {arXiv:2208.14596}\BibitemShut {NoStop}%
\bibitem [{\citenamefont {Dummit}\ and\ \citenamefont
  {Foote}(2004)}]{Dummit_Foote}%
  \BibitemOpen
  \bibfield  {author} {\bibinfo {author} {\bibfnamefont {D.~S.}\ \bibnamefont
  {Dummit}}\ and\ \bibinfo {author} {\bibfnamefont {R.~M.}\ \bibnamefont
  {Foote}},\ }\href@noop {} {\emph {\bibinfo {title} {Abstract {A}lgebra}}},\
  \bibinfo {edition} {3rd}\ ed.\ (\bibinfo  {publisher} {Wiley},\ \bibinfo
  {year} {2004})\BibitemShut {NoStop}%
\bibitem [{\citenamefont {Bennett}(1973)}]{Bennett1973logical}%
  \BibitemOpen
  \bibfield  {author} {\bibinfo {author} {\bibfnamefont {C.~H.}\ \bibnamefont
  {Bennett}},\ }\bibfield  {title} {\bibinfo {title} {Logical reversibility of
  computation},\ }\href {https://doi.org/10.1147/rd.176.0525} {\bibfield
  {journal} {\bibinfo  {journal} {{IBM} Journal of Research and Development}\
  }\textbf {\bibinfo {volume} {17}},\ \bibinfo {pages} {525} (\bibinfo {year}
  {1973})}\BibitemShut {NoStop}%
\bibitem [{\citenamefont {Aaronson}\ \emph {et~al.}(2017)\citenamefont
  {Aaronson}, \citenamefont {Grier},\ and\ \citenamefont
  {Schaeffer}}]{aaronson2015classification}%
  \BibitemOpen
  \bibfield  {author} {\bibinfo {author} {\bibfnamefont {S.}~\bibnamefont
  {Aaronson}}, \bibinfo {author} {\bibfnamefont {D.}~\bibnamefont {Grier}},\
  and\ \bibinfo {author} {\bibfnamefont {L.}~\bibnamefont {Schaeffer}},\
  }\bibfield  {title} {\bibinfo {title} {The classification of reversible bit
  operations},\ }in\ \href {https://doi.org/10.4230/LIPIcs.ITCS.2017.23} {\emph
  {\bibinfo {booktitle} {8th Innovations in Theoretical Computer Science
  Conference (ITCS 2017)}}},\ \bibinfo {series} {Leibniz International
  Proceedings in Informatics (LIPIcs)}, Vol.~\bibinfo {volume} {67},\ \bibinfo
  {editor} {edited by\ \bibinfo {editor} {\bibfnamefont {C.~H.}\ \bibnamefont
  {Papadimitriou}}}\ (\bibinfo  {publisher} {Schloss Dagstuhl --
  Leibniz-Zentrum f{\"u}r Informatik},\ \bibinfo {address} {Dagstuhl,
  Germany},\ \bibinfo {year} {2017})\ pp.\ \bibinfo {pages}
  {23:1--23:34}\BibitemShut {NoStop}%
\bibitem [{\citenamefont {Yuan}\ \emph {et~al.}(2020)\citenamefont {Yuan},
  \citenamefont {Hou}, \citenamefont {Tang}, \citenamefont {Streltsov},
  \citenamefont {Xiang}, \citenamefont {Li},\ and\ \citenamefont
  {Guo}}]{yuan2020direct}%
  \BibitemOpen
  \bibfield  {author} {\bibinfo {author} {\bibfnamefont {Y.}~\bibnamefont
  {Yuan}}, \bibinfo {author} {\bibfnamefont {Z.}~\bibnamefont {Hou}}, \bibinfo
  {author} {\bibfnamefont {J.-F.}\ \bibnamefont {Tang}}, \bibinfo {author}
  {\bibfnamefont {A.}~\bibnamefont {Streltsov}}, \bibinfo {author}
  {\bibfnamefont {G.-Y.}\ \bibnamefont {Xiang}}, \bibinfo {author}
  {\bibfnamefont {C.-F.}\ \bibnamefont {Li}},\ and\ \bibinfo {author}
  {\bibfnamefont {G.-C.}\ \bibnamefont {Guo}},\ }\bibfield  {title} {\bibinfo
  {title} {Direct estimation of quantum coherence by collective measurements},\
  }\href {https://doi.org/10.1038/s41534-020-0280-6} {\bibfield  {journal}
  {\bibinfo  {journal} {npj Quantum Information}\ }\textbf {\bibinfo {volume}
  {6}},\ \bibinfo {pages} {46} (\bibinfo {year} {2020})}\BibitemShut {NoStop}%
\bibitem [{\citenamefont {Rodríguez-Rosario}\ \emph
  {et~al.}(2013)\citenamefont {Rodríguez-Rosario}, \citenamefont
  {Frauenheim},\ and\ \citenamefont
  {Aspuru-Guzik}}]{rodríguezrosario2013thermodynamicsquantumcoherence}%
  \BibitemOpen
  \bibfield  {author} {\bibinfo {author} {\bibfnamefont {C.~A.}\ \bibnamefont
  {Rodríguez-Rosario}}, \bibinfo {author} {\bibfnamefont {T.}~\bibnamefont
  {Frauenheim}},\ and\ \bibinfo {author} {\bibfnamefont {A.}~\bibnamefont
  {Aspuru-Guzik}},\ }\href {https://arxiv.org/abs/1308.1245} {\bibinfo {title}
  {Thermodynamics of quantum coherence}} (\bibinfo {year} {2013}),\ \Eprint
  {https://arxiv.org/abs/1308.1245} {arXiv:1308.1245 [quant-ph]} \BibitemShut
  {NoStop}%
\bibitem [{\citenamefont {Liu}\ and\ \citenamefont
  {Zhou}(2019)}]{Liu2019Distillation}%
  \BibitemOpen
  \bibfield  {author} {\bibinfo {author} {\bibfnamefont {C.~L.}\ \bibnamefont
  {Liu}}\ and\ \bibinfo {author} {\bibfnamefont {D.~L.}\ \bibnamefont {Zhou}},\
  }\bibfield  {title} {\bibinfo {title} {Deterministic coherence
  distillation},\ }\href {https://doi.org/10.1103/PhysRevLett.123.070402}
  {\bibfield  {journal} {\bibinfo  {journal} {Physical Review Letters}\
  }\textbf {\bibinfo {volume} {123}},\ \bibinfo {pages} {070402} (\bibinfo
  {year} {2019})}\BibitemShut {NoStop}%
\bibitem [{\citenamefont {Regula}\ \emph {et~al.}(2018)\citenamefont {Regula},
  \citenamefont {Fang}, \citenamefont {Wang},\ and\ \citenamefont
  {Adesso}}]{Regula2018distill}%
  \BibitemOpen
  \bibfield  {author} {\bibinfo {author} {\bibfnamefont {B.}~\bibnamefont
  {Regula}}, \bibinfo {author} {\bibfnamefont {K.}~\bibnamefont {Fang}},
  \bibinfo {author} {\bibfnamefont {X.}~\bibnamefont {Wang}},\ and\ \bibinfo
  {author} {\bibfnamefont {G.}~\bibnamefont {Adesso}},\ }\bibfield  {title}
  {\bibinfo {title} {One-shot coherence distillation},\ }\href
  {https://doi.org/10.1103/PhysRevLett.121.010401} {\bibfield  {journal}
  {\bibinfo  {journal} {Physical Review Letters}\ }\textbf {\bibinfo {volume}
  {121}},\ \bibinfo {pages} {010401} (\bibinfo {year} {2018})}\BibitemShut
  {NoStop}%
\bibitem [{\citenamefont {Jing}\ \emph {et~al.}(2023)\citenamefont {Jing},
  \citenamefont {Liu}, \citenamefont {Ren},\ and\ \citenamefont
  {Wang}}]{JLRW23}%
  \BibitemOpen
  \bibfield  {author} {\bibinfo {author} {\bibfnamefont {M.}~\bibnamefont
  {Jing}}, \bibinfo {author} {\bibfnamefont {G.}~\bibnamefont {Liu}}, \bibinfo
  {author} {\bibfnamefont {H.}~\bibnamefont {Ren}},\ and\ \bibinfo {author}
  {\bibfnamefont {X.}~\bibnamefont {Wang}},\ }\href
  {https://doi.org/10.48550/arXiv.2310.07797} {\bibinfo {title} {Quantum
  sequential scattering model for quantum state learning}} (\bibinfo {year}
  {2023}),\ \bibinfo {note} {arXiv:2310.07797}\BibitemShut {NoStop}%
\bibitem [{\citenamefont {Werner}(1989{\natexlab{b}})}]{Werner89}%
  \BibitemOpen
  \bibfield  {author} {\bibinfo {author} {\bibfnamefont {R.~F.}\ \bibnamefont
  {Werner}},\ }\bibfield  {title} {\bibinfo {title} {Quantum states with
  {E}instein-{P}odolsky-{R}osen correlations admitting a hidden-variable
  model},\ }\href {https://doi.org/10.1103/PhysRevA.40.4277} {\bibfield
  {journal} {\bibinfo  {journal} {Physical Review A}\ }\textbf {\bibinfo
  {volume} {40}},\ \bibinfo {pages} {4277} (\bibinfo {year}
  {1989}{\natexlab{b}})}\BibitemShut {NoStop}%
\bibitem [{\citenamefont {Lyons}\ and\ \citenamefont
  {Walck}(2008)}]{Lyons2008multiparty}%
  \BibitemOpen
  \bibfield  {author} {\bibinfo {author} {\bibfnamefont {D.~W.}\ \bibnamefont
  {Lyons}}\ and\ \bibinfo {author} {\bibfnamefont {S.~N.}\ \bibnamefont
  {Walck}},\ }\bibfield  {title} {\bibinfo {title} {Multiparty quantum states
  stabilized by the diagonal subgroup of the local unitary group},\ }\href
  {https://doi.org/10.1103/PhysRevA.78.042314} {\bibfield  {journal} {\bibinfo
  {journal} {Physical Review A}\ }\textbf {\bibinfo {volume} {78}},\ \bibinfo
  {pages} {042314} (\bibinfo {year} {2008})}\BibitemShut {NoStop}%
\bibitem [{\citenamefont {Lyons}\ \emph {et~al.}(2023)\citenamefont {Lyons},
  \citenamefont {Mullican}, \citenamefont {Rilatt},\ and\ \citenamefont
  {Putnam}}]{Lyons2023Werner}%
  \BibitemOpen
  \bibfield  {author} {\bibinfo {author} {\bibfnamefont {D.~W.}\ \bibnamefont
  {Lyons}}, \bibinfo {author} {\bibfnamefont {C.}~\bibnamefont {Mullican}},
  \bibinfo {author} {\bibfnamefont {A.}~\bibnamefont {Rilatt}},\ and\ \bibinfo
  {author} {\bibfnamefont {J.~D.}\ \bibnamefont {Putnam}},\ }\bibfield  {title}
  {\bibinfo {title} {Werner states from diagrams},\ }\href
  {https://doi.org/10.1088/1751-8121/acd039} {\bibfield  {journal} {\bibinfo
  {journal} {Journal of Physics A: Mathematical and Theoretical}\ }\textbf
  {\bibinfo {volume} {56}},\ \bibinfo {pages} {225301} (\bibinfo {year}
  {2023})}\BibitemShut {NoStop}%
\bibitem [{\citenamefont {Chiribella}\ \emph {et~al.}(2013)\citenamefont
  {Chiribella}, \citenamefont {D'Ariano}, \citenamefont {Perinotti},\ and\
  \citenamefont {Valiron}}]{switch1}%
  \BibitemOpen
  \bibfield  {author} {\bibinfo {author} {\bibfnamefont {G.}~\bibnamefont
  {Chiribella}}, \bibinfo {author} {\bibfnamefont {G.~M.}\ \bibnamefont
  {D'Ariano}}, \bibinfo {author} {\bibfnamefont {P.}~\bibnamefont
  {Perinotti}},\ and\ \bibinfo {author} {\bibfnamefont {B.}~\bibnamefont
  {Valiron}},\ }\bibfield  {title} {\bibinfo {title} {Quantum computations
  without definite causal structure},\ }\href
  {https://doi.org/10.1103/PhysRevA.88.022318} {\bibfield  {journal} {\bibinfo
  {journal} {Physical Review A}\ }\textbf {\bibinfo {volume} {88}},\ \bibinfo
  {pages} {022318} (\bibinfo {year} {2013})}\BibitemShut {NoStop}%
\bibitem [{\citenamefont {van~der Lugt}\ \emph {et~al.}(2023)\citenamefont
  {van~der Lugt}, \citenamefont {Barrett},\ and\ \citenamefont
  {Chiribella}}]{switch2}%
  \BibitemOpen
  \bibfield  {author} {\bibinfo {author} {\bibfnamefont {T.}~\bibnamefont
  {van~der Lugt}}, \bibinfo {author} {\bibfnamefont {J.}~\bibnamefont
  {Barrett}},\ and\ \bibinfo {author} {\bibfnamefont {G.}~\bibnamefont
  {Chiribella}},\ }\bibfield  {title} {\bibinfo {title} {Device-independent
  certification of indefinite causal order in the quantum switch},\ }\href
  {https://doi.org/10.1038/s41467-023-40162-8} {\bibfield  {journal} {\bibinfo
  {journal} {Nature Communications}\ }\textbf {\bibinfo {volume} {14}},\
  \bibinfo {pages} {5811} (\bibinfo {year} {2023})}\BibitemShut {NoStop}%
\bibitem [{\citenamefont {Gao}\ \emph {et~al.}(2023)\citenamefont {Gao},
  \citenamefont {Li}, \citenamefont {Mishra}, \citenamefont {Yan},
  \citenamefont {Simonov},\ and\ \citenamefont {Chiribella}}]{switch3}%
  \BibitemOpen
  \bibfield  {author} {\bibinfo {author} {\bibfnamefont {N.}~\bibnamefont
  {Gao}}, \bibinfo {author} {\bibfnamefont {D.}~\bibnamefont {Li}}, \bibinfo
  {author} {\bibfnamefont {A.}~\bibnamefont {Mishra}}, \bibinfo {author}
  {\bibfnamefont {J.}~\bibnamefont {Yan}}, \bibinfo {author} {\bibfnamefont
  {K.}~\bibnamefont {Simonov}},\ and\ \bibinfo {author} {\bibfnamefont
  {G.}~\bibnamefont {Chiribella}},\ }\bibfield  {title} {\bibinfo {title}
  {Measuring incompatibility and clustering quantum observables with a quantum
  switch},\ }\href {https://doi.org/10.1103/PhysRevLett.130.170201} {\bibfield
  {journal} {\bibinfo  {journal} {Physical Review Letters}\ }\textbf {\bibinfo
  {volume} {130}},\ \bibinfo {pages} {170201} (\bibinfo {year}
  {2023})}\BibitemShut {NoStop}%
\bibitem [{\citenamefont {Childs}\ \emph {et~al.}(2018)\citenamefont {Childs},
  \citenamefont {Maslov}, \citenamefont {Nam}, \citenamefont {Ross},\ and\
  \citenamefont {Su}}]{childs2018simulation}%
  \BibitemOpen
  \bibfield  {author} {\bibinfo {author} {\bibfnamefont {A.~M.}\ \bibnamefont
  {Childs}}, \bibinfo {author} {\bibfnamefont {D.}~\bibnamefont {Maslov}},
  \bibinfo {author} {\bibfnamefont {Y.}~\bibnamefont {Nam}}, \bibinfo {author}
  {\bibfnamefont {N.~J.}\ \bibnamefont {Ross}},\ and\ \bibinfo {author}
  {\bibfnamefont {Y.}~\bibnamefont {Su}},\ }\bibfield  {title} {\bibinfo
  {title} {Toward the first quantum simulation with quantum speedup},\ }\href
  {https://doi.org/10.1073/pnas.1801723115} {\bibfield  {journal} {\bibinfo
  {journal} {Proceedings of the National Academy of Sciences}\ }\textbf
  {\bibinfo {volume} {115}},\ \bibinfo {pages} {9456} (\bibinfo {year}
  {2018})}\BibitemShut {NoStop}%
\bibitem [{\citenamefont {Hall}(2013)}]{hall2013quantum}%
  \BibitemOpen
  \bibfield  {author} {\bibinfo {author} {\bibfnamefont {B.~C.}\ \bibnamefont
  {Hall}},\ }\href@noop {} {\emph {\bibinfo {title} {Quantum Theory for
  Mathematicians}}}\ (\bibinfo  {publisher} {Springer},\ \bibinfo {year}
  {2013})\BibitemShut {NoStop}%
\bibitem [{\citenamefont {Kirby}\ and\ \citenamefont
  {Strauch}(2018)}]{Kirby2018schur}%
  \BibitemOpen
  \bibfield  {author} {\bibinfo {author} {\bibfnamefont {W.~M.}\ \bibnamefont
  {Kirby}}\ and\ \bibinfo {author} {\bibfnamefont {F.~W.}\ \bibnamefont
  {Strauch}},\ }\bibfield  {title} {\bibinfo {title} {{A practical quantum
  algorithm for the {S}chur transform}},\ }\href
  {https://doi.org/10.26421/QIC18.9-10-1} {\bibfield  {journal} {\bibinfo
  {journal} {Quantum Information and Computation}\ }\textbf {\bibinfo {volume}
  {18}},\ \bibinfo {pages} {0721} (\bibinfo {year} {2018})},\ \Eprint
  {https://arxiv.org/abs/1709.07119} {arXiv:1709.07119 [quant-ph]} \BibitemShut
  {NoStop}%
\bibitem [{\citenamefont {Krovi}(2019)}]{Krovi2019efficienthigh}%
  \BibitemOpen
  \bibfield  {author} {\bibinfo {author} {\bibfnamefont {H.}~\bibnamefont
  {Krovi}},\ }\bibfield  {title} {\bibinfo {title} {An efficient high
  dimensional quantum {S}chur transform},\ }\href
  {https://doi.org/10.22331/q-2019-02-14-122} {\bibfield  {journal} {\bibinfo
  {journal} {{Quantum}}\ }\textbf {\bibinfo {volume} {3}},\ \bibinfo {pages}
  {122} (\bibinfo {year} {2019})}\BibitemShut {NoStop}%
\bibitem [{\citenamefont {Cervero-Martín}\ and\ \citenamefont
  {Man{\v{c}}inska}(2023)}]{cervero2023weak}%
  \BibitemOpen
  \bibfield  {author} {\bibinfo {author} {\bibfnamefont {E.}~\bibnamefont
  {Cervero-Martín}}\ and\ \bibinfo {author} {\bibfnamefont {L.}~\bibnamefont
  {Man{\v{c}}inska}},\ }\href {https://doi.org/10.48550/arXiv.2309.11947}
  {\bibinfo {title} {Weak {S}chur sampling with logarithmic quantum memory}}
  (\bibinfo {year} {2023}),\ \bibinfo {note} {arXiv:2309.11947}\BibitemShut
  {NoStop}%
\bibitem [{\citenamefont {Bacon}\ \emph {et~al.}(2007)\citenamefont {Bacon},
  \citenamefont {Chuang},\ and\ \citenamefont {Harrow}}]{bacon2005quantum}%
  \BibitemOpen
  \bibfield  {author} {\bibinfo {author} {\bibfnamefont {D.}~\bibnamefont
  {Bacon}}, \bibinfo {author} {\bibfnamefont {I.~L.}\ \bibnamefont {Chuang}},\
  and\ \bibinfo {author} {\bibfnamefont {A.~W.}\ \bibnamefont {Harrow}},\
  }\bibfield  {title} {\bibinfo {title} {The quantum {S}chur and
  {C}lebsch--{G}ordan transforms: I. efficient qudit circuits},\ }in\
  \href@noop {} {\emph {\bibinfo {booktitle} {Proceedings of the Eighteenth
  Annual ACM-SIAM Symposium on Discrete Algorithms}}},\ \bibinfo {series and
  number} {SODA '07}\ (\bibinfo  {publisher} {Society for Industrial and
  Applied Mathematics},\ \bibinfo {address} {USA},\ \bibinfo {year} {2007})\
  p.\ \bibinfo {pages} {1235–1244},\ \bibinfo {note}
  {arXiv:quant-ph/0601001}\BibitemShut {NoStop}%
\bibitem [{\citenamefont {Andrews}\ and\ \citenamefont
  {Eriksson}(2004)}]{andrews2004integer}%
  \BibitemOpen
  \bibfield  {author} {\bibinfo {author} {\bibfnamefont {G.~E.}\ \bibnamefont
  {Andrews}}\ and\ \bibinfo {author} {\bibfnamefont {K.}~\bibnamefont
  {Eriksson}},\ }\href@noop {} {\emph {\bibinfo {title} {Integer Partitions}}}\
  (\bibinfo  {publisher} {Cambridge University Press},\ \bibinfo {year}
  {2004})\BibitemShut {NoStop}%
\bibitem [{\citenamefont {Fulton}\ and\ \citenamefont
  {Harris}(2013)}]{fulton2013representation}%
  \BibitemOpen
  \bibfield  {author} {\bibinfo {author} {\bibfnamefont {W.}~\bibnamefont
  {Fulton}}\ and\ \bibinfo {author} {\bibfnamefont {J.}~\bibnamefont
  {Harris}},\ }\href@noop {} {\emph {\bibinfo {title} {Representation Theory: A
  First Course}}},\ Vol.\ \bibinfo {volume} {129}\ (\bibinfo  {publisher}
  {Springer Science \& Business Media},\ \bibinfo {year} {2013})\BibitemShut
  {NoStop}%
\end{thebibliography}%

\appendix

\section{Proof of Equation~\eqref{eq:resolution-of-identity-end-result}}

\label{app:proof-res-identity}

In this appendix, we examine Example~\ref{ex:res_identity} and give a derivation for the acceptance probability given in~\eqref{eq:resolution-of-identity-end-result}. Before we begin with the proof, we show some preliminary results.
\begin{lemma}
    Consider a resolution of the identity 
    \begin{equation}
        \mathbb{I} = \sum_{i=0}^{r-1} P_i,
    \end{equation}
    where $P_i^\dagger = P_i$, and $P_i^2 = P_i$. Then, 
    \begin{equation}
        P_i P_j = \delta_{i, j} P_i.
    \end{equation}
\end{lemma}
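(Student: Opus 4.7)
The goal is to show that any family of self-adjoint idempotents summing to the identity is automatically mutually orthogonal. My plan is to reduce the claim to a positivity argument: I will multiply the resolution of identity by $P_i$ on both sides to produce a sum of positive semidefinite operators that must vanish, and then conclude that each summand is zero.

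Concretely, first I would sandwich the resolution of identity between two copies of $P_i$, giving
\begin{equation}
P_i = P_i \mathbb{I} P_i = \sum_{j=0}^{r-1} P_i P_j P_i.
\end{equation}
Using $P_i^2 = P_i$ to peel off the $j=i$ term, this rearranges to $\sum_{j \neq i} P_i P_j P_i = 0$. The key observation is that each cross term is positive semidefinite: since $P_j$ is a self-adjoint idempotent, $P_j = P_j^\dagger P_j$, and hence
\begin{equation}
P_i P_j P_i = P_i P_j^\dagger P_j P_i = (P_j P_i)^\dagger (P_j P_i) \succeq 0.
\end{equation}

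From here the conclusion is immediate: a sum of positive semidefinite operators that equals zero forces every summand to vanish, so $(P_j P_i)^\dagger (P_j P_i) = 0$ for each $j \neq i$, which in turn implies $P_j P_i = 0$. Taking the adjoint gives $P_i P_j = 0$ as well, so together with the idempotence $P_i^2 = P_i$ we obtain $P_i P_j = \delta_{i,j} P_i$ as claimed.

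There is no real obstacle in this proof; the only substantive step is recognizing the positivity of $P_i P_j P_i$, which follows cleanly from the hypotheses that each $P_i$ is self-adjoint and idempotent. The rest is algebraic bookkeeping.
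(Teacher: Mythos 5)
Your proof is correct and follows essentially the same route as the paper's: sandwiching the resolution of identity between two copies of a projector, noting the off-diagonal terms $P_i P_j P_i = (P_j P_i)^\dagger (P_j P_i)$ are positive semidefinite and sum to zero, and concluding each vanishes. Your final step $(P_j P_i)^\dagger(P_j P_i)=0 \Rightarrow P_j P_i = 0$ is the same fact the paper establishes via the Hilbert--Schmidt norm computation $\left\Vert P_i P_j\right\Vert_2^2 = \operatorname{Tr}[P_j P_i P_j] = 0$.
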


\begin{proof}
    Consider the following:
    \begin{align}
        P_j &= P_j \mathbb{I} P_j  \\
        &= P_j \left( \sum_{i=0}^{r-1} P_i \right) P_j  \\
        &= \sum_{i=0}^{r-1} P_j P_i P_j  \\
        &= P_j + \sum_{i, i \neq j} P_j P_i P_j.
    \end{align}
    Thus, 
    \begin{equation}
    \label{eq:sum_is_zero}
        \sum_{i, i \neq j} P_j P_i P_j = 0.
    \end{equation}
    Next, consider that, for every $i$ and $j$ such that $i \neq j$, and after defining $\vert \phi \rangle \coloneqq P_i P_j \vert \psi \rangle$,
    \begin{align}
        \langle \psi \vert P_j P_i P_j \vert \psi \rangle &= \langle \psi \vert P_j P_i P_i P_j \vert \psi \rangle \\
        &= \langle \phi \vert \phi \rangle  \\
        &\geq 0,
    \end{align}
    for every vector $\vert \psi \rangle$. Thus, $P_j P_i P_j$ is a positive semi-definite operator, for all $i \neq j$. Using~\eqref{eq:sum_is_zero}, a sum of positive semi-definite operators is equal to zero if and only if each terms is zero. Thus,
    \begin{equation}
        P_j P_i P_j = 0 \quad \forall i \neq j.
    \end{equation}
    Lastly, consider that, for all $i \neq j$,
    \begin{align}
        \left\Vert P_i P_j \right\Vert^2_2 &= \operatorname{Tr}[P_j P_i P_i P_j]  \\
        &= \operatorname{Tr}[P_j P_i P_j]  \\
        &= 0.
    \end{align}
    Using the fact that the two-norm of an operator is zero if and only if the operator itself is zero, we see that $P_i P_j = 0$ for $i \neq j$, completing the proof.
\end{proof}

\medskip

Next, we show that the operator $U$, defined in \eqref{eq:def-U-res-I}, is indeed unitary.
Consider that
\begin{equation}
    U = \sum_{j=0}^{r-1} \exp{\left(\frac{2\pi \mathrm{i} j}{r}\right)} P_j.
\end{equation}
Thus, we see that
\begin{align}
    UU^\dagger &= \left(\sum_{j=0}^{r-1} \exp{\left(\frac{2\pi \mathrm{i} j}{r}\right)} P_j\right) \left(\sum_{k=0}^{r-1} \exp{\left(\frac{-2\pi \mathrm{i} k}{r}\right)} P_k\right) \notag \\
    &= \sum_{j=0}^{r-1} \sum_{k=0}^{r-1} \exp{\left(\frac{2\pi \mathrm{i} (j-k)}{r}\right)} P_jP_k \notag \\
    &= \sum_{j=0}^{r-1} \sum_{k=0}^{r-1} \exp{\left(\frac{2\pi \mathrm{i} (j-k)}{r}\right)} \delta_{j, k} P_j \notag \\
    &= \sum_{j=0}^{r=1} P_j \notag \\
    &= \mathbb{I}.
\end{align}
Similarly, we can show that $U^\dagger U = \mathbb{I}$. 

The initial state of the system, in the algorithm described just before \eqref{eq:resolution-of-identity-end-result}, is given by
\begin{equation}
    \ket{000}_{C_3C_2C_1} \ket{\psi}_{S_1R_1} \ket{\psi}_{S_2R_2}.
\end{equation}
Next, we act on system $C_3$ with the Hadamard operation, and on $C_2$ and $C_1$ each with the quantum Fourier transform (QFT). The overall state is now
\begin{equation}
    \frac{1}{r} \ket{+}_{C_3} \sum_{a,b=0}^{r-1} \ket{ab}_{C_2C_1} \ket{\psi} \ket{\psi}.
\end{equation}
Next, we perform the two controlled operations
\begin{align}
    &\sum_{a=0}^{r-1} \outerproj{a}_{C_1} \otimes U^a_{A_2B_2}, \\
    &\sum_{d=0}^{r-1} \outerproj{b}_{C_2} \otimes U^b_{A_1B_1}.
\end{align}
The overall state is then
\begin{equation}
    \frac{1}{r} \ket{+} \sum_{a,b=0}^{r-1} \ket{a} \ket{b} U^a\ket{\psi} \otimes U^b\ket{\psi}.
\end{equation}
Next, we perform the $\tilde{Z}$ operation on each of the systems $C_2$ and $C_1$, where the operation is defined by
\begin{equation}
    \tilde{Z} \coloneqq \sum_{c=0}^{r-1} \exp{\left(\frac{2\pi \mathrm{i} j}{r}\right)} \outerproj{c}.
\end{equation}
The state is then given by
\begin{equation}
    \frac{1}{r} \ket{+} \sum_{a,b=0}^{r-1} \exp{\left(\frac{2\pi \mathrm{i} (a+b)}{r}\right)} \ket{a} \ket{b} U^a\ket{\psi} \otimes U^b\ket{\psi}.
\end{equation}
\begin{widetext}
Performing the inverse QFT (IQFT) on each of the systems $C_1$ and $C_2$ next leads to the following state:
\begin{align}
    &\frac{1}{r^2} \ket{+} \sum_{a,b=0}^{r-1} \sum_{c,d=0}^{r-1} \exp{\left(\frac{2\pi \mathrm{i} (a+b)}{r}\right)} \exp{\left(\frac{-2\pi \mathrm{i} (ac+bd)}{r}\right)}
    \ket{c} \ket{d} U^a\ket{\psi} \otimes U^b\ket{\psi} \notag \\
    &= \frac{1}{r^2} \ket{+} \sum_{c,d=0}^{r-1} \ket{c} \ket{d}  \Bigg[ \sum_{a,b=0}^{r-1} \exp{\left(\frac{2\pi \mathrm{i} (a+b)}{r}\right)} \exp{\left(\frac{-2\pi \mathrm{i} (ac+bd)}{r}\right)} U^a \otimes U^b \Bigg] \ket{\psi} \ket{\psi} \notag \\
    &= \frac{1}{r^2} \ket{+} \sum_{c,d=0}^{r-1} \ket{c} \ket{d} \Bigg[ \sum_{a=0}^{r-1} \exp{\left(\frac{2\pi \mathrm{i} a}{r}\right)} \exp{\left(\frac{-2\pi \mathrm{i} ac}{r}\right)} U^a \Bigg] \ket{\psi} \otimes \Bigg[ \sum_{b=0}^{r-1} \exp{\left(\frac{2\pi \mathrm{i} b}{r}\right)} \exp{\left(\frac{-2\pi \mathrm{i} bd}{r}\right)} U^b \Bigg] \ket{\psi}.
\end{align}
Consider the following:
\begin{align}
     \sum_{a=0}^{r-1} \exp{\left(\frac{2\pi \mathrm{i} a}{r}\right)} \exp{\left(\frac{-2\pi \mathrm{i} ac}{r}\right)} U^a 
    &= \sum_{a=0}^{r-1} \sum_{j=0}^{r-1} \exp{\left(\frac{2\pi \mathrm{i} a}{r}\right)} \exp{\left(\frac{-2\pi \mathrm{i} ac}{r}\right)} \exp{\left(\frac{2\pi \mathrm{i} ja}{r}\right)} P_j \notag \\
    &= \sum_{j=0}^{r-1} r \delta_{j+1,c} P_j \notag \\
    &= r P_{c-1},
\end{align}
where the subscript $c-1$ is taken modulo $r$. Thus, the overall state after the IQFT  is given by
\begin{equation}
    \ket{+} \sum_{c,d=0}^{r-1} \ket{c} \ket{d} P_{c-1} \ket{\psi} 
    \otimes P_{d-1} \ket{\psi}.
\end{equation}
Next, we apply the three CSWAP operations and the Hadamard operation
\begin{equation}
    \frac{1}{\sqrt{2}} \Bigg[ \ket{+} \sum_{c,d=0}^{r-1} \ket{c} \ket{d} P_{c-1} \ket{\psi} \otimes P_{d-1} \ket{\psi} + \ket{-} \sum_{c,d=0}^{r-1} \ket{c} \ket{d} F(P_{d-1} \ket{\psi} \otimes P_{c-1} \ket{\psi}) \Bigg].
\end{equation}
Expanding, we get
\begin{equation}
\label{eq:app-Phi-state}
    \ket{\Phi} = \frac{1}{2} \sum_{y=0}^{1} \sum_{c,d=0}^{r-1} \left[ \ket{y}\ket{cd} (P_{c-1} \otimes P_{d-1} + (-1)^y F(P_{d-1} \otimes P_{c-1})) \right].
\end{equation}

\end{widetext}
Lastly, measuring registers $C_3 C_2 C_1$, we find that
\begin{equation}
\begin{aligned}
   &  p(xab)  \\
   & = \left\Vert \bra{xab}_{C_3 C_2 C_1} \otimes I_{S_1 R_1 S_2 R_2} \ket{\Phi} \right\Vert^2_2.  \\
   &= \frac{1}{2} [\Tr{P_{a-1} \rho}\Tr{P_{b-1} \rho} + (-1)^x \Tr{P_{a-1} \rho P_{b-1} \rho}].
\end{aligned}
\end{equation}
Thus, it follows that
\begin{equation}
    p(0ab) - p(1ab) = \Tr{P_{a-1}\rho P_{b-1}\rho}.
    \label{eq:p0ab-p1ab}
\end{equation}
Observe that 
\begin{equation}
\begin{aligned}
    p(0ba) - p(1ba) & = \Tr{P_{b-1}\rho P_{a-1}\rho}\\
     & = \Tr{P_{a-1}\rho P_{b-1}\rho },
\end{aligned}
    \label{eq:p0ba-p1ba}
\end{equation}

As such, to estimate this quantity, we can perform a sampling procedure similar to what is described around \eqref{eq:mu-mean}. For all $t\in \{1, \ldots, T\}$, let $Y_t$ be a random variable that is set to  $+1$ after obtaining the results $0ab$ or $0ba$, $-1$ after obtaining the results $1ab$ or $1ba$, and 0 otherwise. Then, it follows from~\eqref{eq:p0ab-p1ab}--\eqref{eq:p0ba-p1ba} and \eqref{eq:cross-term} that
\begin{align}
\mu & \equiv \mathbb{E}[Y_t] \\
& = 2 \Tr{P_{a-1}\rho P_{b-1}\rho } \\
& = 4 \left \Vert \operatorname{Re}(P_{a-1} \rho P_{b-1})\right \Vert_2^2,
\end{align}
so that $\overline{Y_T} \coloneqq \frac{1}{T}\sum_{t=1}^T Y_t$ is an unbiased estimator of $4 \left \Vert \operatorname{Re}(P_{a-1} \rho P_{b-1})\right \Vert_2^2$. Set $\varepsilon > 0$ and $\delta \in (0,1)$. The Hoeffding bound (see, e.g., \cite[Lemma~1]{rethinasamy2023quantum})  implies that, if $T \geq \frac{2}{\varepsilon^2} \ln \!\left( \frac{2}{\delta}\right)$, then $\Pr[\vert \overline{Y_T} - \mu \vert \leq \varepsilon] \geq 1-\delta$.
\end{document}